\journal{Journal of Information Sciences}
\newcommand{\ignore}[1]{}
\newcommand\With{\textbf{with}}
\newcommand\From{\textbf{from}}
\newcommand\Broadcast{\textbf{broadcast}}
\theoremstyle{definition}
\newtheorem{theorem}{Theorem}[section]
\newtheorem{lemma}[theorem]{Lemma}
\newtheorem{definition}[theorem]{Definition}
\newtheorem{observation}[theorem]{Observation}
\newtheorem{remark}[theorem]{Remark}
\title{Marvel DC: A Blockchain-Based Decentralized and Incentive-Compatible Distributed Computing Protocol\tnoteref{t1}}
\newcommand{\adversary}{\mathcal{A}}
\newcommand{\adversarialBound}{\alpha}
\newcommand{\avgRepChange }{\textit{avgRepChange} }
\newcommand{\badComputers}{\indicesForComputation_{bad}}
\newcommand{\baseRep}{\textit{baseRep}}
\newcommand{\blockchainHeight}{H}
\newcommand{\blockProposer}{blockProposer}
\newcommand{\compDecKey}{\textit{compDecKey}}
\newcommand{\compEncKey}{\textit{compEncKey}}
\newcommand{\computation}{\textit{calc}}
\newcommand{\computationCost}{\textit{cost}(\computation)}
\newcommand{\computationStart}{\textit{start}}
\newcommand{\computationProposers}{\textit{proposers}}
\newcommand{\computationResult}{\textit{result}}
\newcommand{\computationReward}{\textit{Reward}(\computation)}
\newcommand{\compute}{\textit{compute}}
\newcommand{\computer}{\textit{c}}
\newcommand{\computers}{\textit{C}}
\newcommand{\contract}{\textit{contract}}
\newcommand{\counter}{\textit{ctr}}
\newcommand{\currentHeight}{\blockchain.\textit{height}}
\newcommand{\denominator}{\textit{denominator}}
\newcommand{\escrowComputer}{\escrow_\textit{comp}}
\newcommand{\escrowRequester}{\escrow_\textit{req}}
\newcommand{\expectedBaseRepChange}{E_\textit{repChange}}
\newcommand{\finalizeDeadline}{\textit{FD}}
\newcommand{\genProbSelect}{\textit{genProbSelect}}
\newcommand{\getDistance}{\textit{dist}}
\newcommand{\getEscrows}{\textit{getEscrows}}
\newcommand{\getFinalizeDeadline}{\textit{getFinalizeDeadline}}
\newcommand{\getLength}{\textit{len}}
\newcommand{\getMinimum}{\textit{min}}
\newcommand{\getRandomness}{\textit{genRandom}}
\newcommand{\goodComputers}{\indicesForComputation_{good}}
\newcommand{\goodResponses}{\responses_{good}}
\newcommand{\hash}{\textit{SHA}}
\newcommand{\isValid}{\textit{valid}}
\newcommand{\indicesForComputation}{I}
\newcommand{\initRep}{\textit{initRep}}
\newcommand{\marginOfError}{\sigma}
\newcommand{\minRep}{\textit{minRep}}
\newcommand{\numComputers}{{n_{comp}}}
\newcommand{\numComputersBound}{{n_{\secParam}}}
\newcommand{\numPlayers}{n}
\newcommand{\numToReward}{{n_{reward}}}
\newcommand{\playerReward}{\reward_i}
\newcommand{\playerSetDescription}{\{\player_1,...,\player_\numPlayers\}}
\newcommand{\probBadRating}{\gamma}
\newcommand{\probGoodRating}{\omega}
\newcommand{\probSelect}{\textit{probSelect}}
\newcommand{\protocolName}{\text{Marvel DC}}
\newcommand{\randomSeed}{\textit{randomSeed}}
\newcommand{\rate}{\textit{rate}}
\newcommand{\rateComputations}{\textit{rateComputations}}
\newcommand{\reputations}{\textit{Reps}}
\newcommand{\requester}{\textit{requester}}
\newcommand{\response}{\textit{response}}
\newcommand{\responses}{\textit{responses}}
\newcommand{\result}{\textit{result}}
\newcommand{\results}{\textit{results}}
\newcommand{\reward}{\textit{Reward}}
\newcommand{\selectComputers}{\textit{selectComputers}}
\newcommand{\selectionProb}{\textit{probSelect}}
\newcommand{\step}{\textit{step}}
\newcommand{\stepComputing}{\textit{computing}}
\newcommand{\stepFinalized}{\textit{finalized}}
\newcommand{\sumReps}{\textit{sumReps}}
\newcommand{\tagFinalize}{\textit{FINALIZE}}
\newcommand{\tagRegister}{\textit{REGISTER}}
\newcommand{\tagRequest}{\textit{REQUEST}}
\newcommand{\tagResponse}{\textit{RESPONSE}}
\newcommand{\target}{\tau}
\newcommand{\targetFunction}{\textit{f}_\target}
\newcommand{\targetFunctions}{\textit{tFunctions}}
\newcommand{\txFee}{\textit{Fee}(\tx)}
\newcommand{\updateReputations}{\textit{updateReputations}}
\newcommand{\balance}{\textit{bal}}
\def\bitcoin{%
  \leavevmode
  \vtop{\offinterlineskip 
    \setbox0=\hbox{B}%
    \setbox2=\hbox to\wd0{\hfil\hskip-.03em
    \vrule height .3ex width .15ex\hskip .08em
    \vrule height .3ex width .15ex\hfil}
    \vbox{\copy2\box0}\box2}}
\newcommand{\blockchain}{\textit{Blockchain}}
\newcommand{\commit}{f_\textit{com}}
\newcommand{\commitment}{\textit{com}}
\newcommand{\commitments}{\textit{Com}}
\newcommand{\commRandomness}{\textit{r}}
\newcommand{\commSerialNum}{\textit{S}}
\newcommand{\cryptoParam}{\kappa}
\newcommand{\escrow}{\textit{escrow}}
\newcommand{\feeRelayer}{\textit{Fee}(\textit{relay})}
\newcommand{\getAppend}{\textit{.append}}
\newcommand{\getBalance}{\textit{.balance}}
\newcommand{\getTransfer}{\textit{.transfer}}
\newcommand{\player}{\textit{P}}
\newcommand{\playeri}{\player_i}
\newcommand{\proofZK}{\pi}
\newcommand{\regToken}{\textit{regID}}
\newcommand{\revealTXDelay}{T}
\newcommand{\secParam}{\psi}
\newcommand{\tx}{\textit{tx}}
\newcommand{\yes}{\textcolor{green}{\ding{51}}}
\newcommand{\no}{\textcolor{red}{\ding{55}}}
\newcommand{\decrypt}{decrypt}
\newcommand{\encrypt}{encrypt}
\newcommand{\ourBAR}{\text{ByRa}}
\newcommand{\newconstruct}[5]{%
  \newenvironment{ALC@\string#1}{\begin{ALC@g}}{\end{ALC@g}}
   \newcommand{#1}[2][default]{\ALC@it#2\ ##2\ #3%
     \ALC@com{##1}\begin{ALC@\string#1}}
   \ifthenelse{\boolean{ALC@noend}}{
     \newcommand{#4}{\end{ALC@\string#1}}
   }{
     \newcommand{#4}{\end{ALC@\string#1}\ALC@it#5}
   } 
}
\newcommand{\assign}{\leftarrow}
\newcommand{\ident}[1]{\mathit{#1}}
\def\newident#1{\expandafter\def\csname #1\endcsname{\ident{#1}}}
\newcommand{\SPACE}{\vspace{3mm}}
\newcommand{\logicalAnd}{\textbf{and}}
\newcommand{\logicalOr}{\textbf{or}}
\newcommand{\logicalNot}{\textbf{not}}
\begin{document}

\author[add1,add2]{Conor McMenamin}
\ead{conor.mcmenamin@upf.edu}

\author[add1,add3]{Vanesa Daza}
\ead{vanesa.daza@upf.edu}

\address[add1]{Department of Information and Communication Technologies, Universitat Pompeu Fabra, Barcelona, Spain}
\address[add2]{NOKIA Bell Labs, Nozay, France}
\address[add3]{CYBERCAT - Center for Cybersecurity Research of Catalonia}

\maketitle

\section{Abstract}

Decentralized computation outsourcing should allow anyone to access the large amounts of computational power that exists in the Internet of Things. Unfortunately, when trusted third parties are removed to achieve this decentralization, ensuring an outsourced computation is performed correctly remains a significant challenge. In this paper, we provide a solution to this problem. 

We outline $\protocolName$, a fully decentralized blockchain-based distributed-computing protocol which formally guarantees that computers are strictly incentivized to correctly perform requested computations. Furthermore, $\protocolName$ utilizes a reputation management protocol to ensure that, for any minority of computers not performing calculations correctly, these computers are identified and selected for computations with diminishing probability. We then outline Privacy $\protocolName$, a privacy-enhanced version of $\protocolName$ which decouples results from the computers which computed them, making the protocol suitable for computations such as Federated Learning, where results can reveal sensitive information about that computer that computed them. 
We provide an implementation of $\protocolName$ and analyses of both protocols, demonstrating that they are not only the first protocols to provide the aforementioned formal guarantees, but are also practical, competitive with prior attempts in the field, and ready to deploy. 

\textbf{Keywords:} Distributed Computing, Decentralization, Blockchain, Incentives

\section{Introduction}

The distributing of computation among computers has beckoned a never-before-seen level of computing power available to average users with access to as little as a smart phone and an average internet connection. Distributed computations (DCs) have typically been outsourced directly to one of the few centralized Big Tech. providers such as Amazon Web Services, or Google Cloud. Unfortunately, centralized services like these have many drawbacks. Monopoly of resources, centralized trust, restricted access for certain clients, and in the case of Federated Learning (FL), lack of diverse data-sets make these centralized services unfit for many users and purposes. Full decentralization solves all of these issues. Unfortunately, decentralizing distributed-computing protocols brings many new challenges that are largely protected against in the centralized setting. A significant issue in many  existing decentralized DC implementations \cite{BlockchainDecentralisedFedLearningLi,FLChainBao,FLChainMajeed,FedLearningMeetsBlockchainMa,BlockFLKim,DeepChainWeng, BlockchainRepAwareFedLearningRehman,IncentiveMechFLBlockchainToyoda} is the accurate rewarding of computers to incentivize rational computers, computers who try to maximize their utility (e.g. in cryptocurrency tokens), to correctly perform outsourced computations. In decentralized settings such as blockchain protocols, players only follow an action(s) if that action(s) is(are) strong incentive compatible, resulting in strictly higher payoffs than the alternatives. One method to combat this is, for a given computation, to use zero-knowledge (ZK) tools to prove that a computer performed the actions as prescribed by the requester \cite{FairnessIntegrityPrivacyBlockchainBasedFLSystemRuckel}. Although theoretically any computation can be encoded in this way, the practicality of requiring requesters, typically with low computing power by the nature of outsourcing, to encode their computation as a ZK-circuit to allow for the proving of correct computation is an open question. Furthermore, the pre-computation work required by requesters in \cite{FairnessIntegrityPrivacyBlockchainBasedFLSystemRuckel} is itself intensive (3-40 minutes using 16 virtual CPU cores, and 64 GB RAM), and not appropriate for computationally-restricted requesters, defeating a significant purpose of DC protocols. In this paper, we address all of these issues.

\subsection{Our Contribution}

We present $\protocolName$, a generic (can be adapted to any computational problem outputting results in Euclidean space) blockchain-based decentralized DC protocol which addresses the many gaps that exist in outsourcing computations without the use of a trusted third-party (TTP). Namely, $\protocolName$ provides for the first time in literature a decentralized DC protocol which ensures rational computers are strongly incentivized to follow the protocol. This is a significant advancement in a field where existing claims of incentive compatibility do not account for computers trying to maximize their tokens, with no viable solution, to the best of our knowledge, for the distributing of tokenized rewards in a distributed and decentralized manner. 

Furthermore, $\protocolName$ utilizes reputations to isolate correctly-performing computers when selecting computers for computations. This allows $\protocolName$ to efficiently remove adversarial computers from the protocol. As these reputations are maintained on the blockchain itself, through careful construction (Section \ref{sec:Reputation}) this reputation protocol neither affects the decentralisation or strong incentive compatibility of the protocol. When using a fixed number of computers per computation, our protocol shares all of the benefits of \cite{CoUtileP2PDecentralizedComputing}.
Our description of $\protocolName$ can be adapted to run on any smart-contract enabled blockchain, and as such, can make use of the vast existing communities which exist on such blockchains. This is a further improvement on protocols which require the recruitment and constant participation of an independent network of computers. In a blockchain, where computers form a subset of users, computers can be dormant until required to perform a computation. By deploying on an existing blockchain, any player in that blockchain can also participate in $\protocolName$ as a computer and/or requester. We summarize the main contributions of $\protocolName$ as follows:

\begin{itemize}
    
    \item \textbf{Strong incentive compatibility in expectation}. In Section \ref{sec:SINCE}, we describe how to program rewards such that it is strong incentive compatible in expectation for every rational player (requesters, computers and/or block-producers) in the blockchain system to follow the protocol.
    
    \item \textbf{Handling of symmetric/asymmetric utilities}. By tokenising the protocol rewards, we are able to handle both symmetric (rational computers and requesters only want to produce good results) and asymmetric (rational computers want to be compensated financially) utilities. 
    
    \item \textbf{Decentralization}. Encoding $\protocolName$ in a generic manner for any tokenized smart-contract enabled blockchain, all players in such a  blockchain system are able to participate in $\protocolName$. Furthermore, all of the tools used in $\protocolName$ are already in use in smart-contract enabled decentralized systems such as Ethereum \footnote{\url{https://ethereum.org/en/}} and Harmony \footnote{\url{https://www.harmony.one/}} (our smart-contract encoding \cite{MarvelDCGithub} is written in Solidity \cite{MarvelDCGithub}, a programming language interpretable by both blockchains ), demonstrating our encoding can utilize the decentralisation of such blockchain ecosystems.

\end{itemize}

On top of these main contributions, $\protocolName$ also retains additional desirable properties that should be present in any DC protocol:

\begin{itemize}
    
    \item \textbf{Fair selection}. Our protocol uses a decentralized pseudorandom function (such as a verifiable on-chain oracles\footnote{\url{https://docs.chain.link/docs/chainlink-vrf/}}, or those used in  \cite{ShieldedComputationsInSmartContractsOvercomingForksVisconti}) to provide a random seed used to select computers based on reputation. This removes the possibility of reputation-based manipulation by clusters of colluding computers who may try to prioritize the selection of computers in the colluding cluster, as is allowed in existing works \cite{CoUtileP2PDecentralizedComputing,CoUtileReputationProtocol, FairnessIntegrityPrivacyBlockchainBasedFLSystemRuckel}. 
    
    \item \textbf{Proper management of new peers}. All new computers entering $\protocolName$ are given the same initial reputation, which eventually becomes significantly lower than the average reputation of active honestly-behaving computers.
    
    \item \textbf{Low communication complexity $\&$ cost}. By utilising a blockchain, computers participating in $\protocolName$ are only required to monitor the blockchain for computation requests, and respond when they themselves are selected to participate in the computation. Furthermore, by utilising cost-effective blockchains such as Harmony, the costs for participation in $\protocolName$ are fractions of a US dollar, as shown in Section \ref{sec:Analysis}.
    
\end{itemize}

Given these properties of $\protocolName$, we then outline a series of privacy-enhancing amendments, culminating in Privacy $\protocolName$, a protocol in which results cannot be linked to the computer which provided the result, except with significant additional cost to the player who causes the revelation. Privacy $\protocolName$ retains all of the decentralisation and incentive compatibility guarantees of $\protocolName$, while also adding a layer of privacy which makes it appropriate for sensitive computations where knowing a certain computer computed a particular result can be used to infer private/unwanted information about the computer. This is particularly interesting in the case of FL, where computers are asked to use private data set. 

To do this, we make use of existing ZK set-membership tools (described in Section \ref{sec:ZK}). These tools are used in protocols such as Zerocoin \cite{ZerocoinGreen} and Tornado Cash\footnote{\url{https://tornado.cash/}}, with the latter of which currently deployed and in use on the Ethereum blockchain. Computers encrypt results, and publish them anonymously to the blockchain through a network of relayers: blockchain participants who are incentivized to submit transactions on behalf of other players (explained in detail in Section \ref{sec:Relayers}). The set of good and bad results are published as in $\protocolName$, and as results are not attached to the computer that sent them, computers prove membership among the set of computers submitting good/bad results using the aforementioned ZK set-membership tools. 

Although our privacy techniques are not novel outside of DC, and in fact reduce privacy compared to DC protocols like \cite{FairnessIntegrityPrivacyBlockchainBasedFLSystemRuckel}, the combination of decentralisation, proven strong incentive compatibility and the ability to apply one smart-contract instance of Privacy $\protocolName$ to any computational problem  with output in Euclidean space (summarized in Table \ref{table:RW}) stands as an additional novel contribution.

\section{Related Work}\label{sec:RW}

\begin{table}[H]
\begin{center}
\scalebox{0.7}{
  \begin{tabular}{ |l|c|c|c|c|}
    \toprule
    Protocol & \makecell{Tokenized \\ Rewards} & \makecell{Strong Incentive \\ Compatible} &  Computation-Independent & \makecell{Diminishing Adversarial \\ Selection Prob.}  \\
    \bottomrule
    \cite{BlockchainDecentralisedFedLearningLi,FLChainBao,FLChainMajeed,FedLearningMeetsBlockchainMa,BlockFLKim,DeepChainWeng, BlockchainRepAwareFedLearningRehman} & \no & \no   & \yes & \yes / \no  \\
    \midrule
    Toyoda et al. \cite{IncentiveMechFLBlockchainToyoda}  & \yes & \no   & \yes & \no  \\
    \midrule
    Ruckel et al. \cite{FairnessIntegrityPrivacyBlockchainBasedFLSystemRuckel}  & \yes & \yes   & \no * & \no \\
    \midrule
    Marvel DC & \yes & \yes  & \yes  & \yes \\
    \bottomrule
  \end{tabular}}
  \caption{Comparison of incentive-aware FL protocol designs. *Computation-independence refers to the ability of a particular smart-contract encoding to be re-used for many computations. The ZK circuit-encodings of \cite{FairnessIntegrityPrivacyBlockchainBasedFLSystemRuckel} must be generated anew for each type of computation, placing significant upfront costs on computation requesters. } \label{table:RW}
\end{center}
\end{table}

There are many intertwined areas of research regarding the decentralized outsourcing of computations to distributed sets of potentially untrusted peers. Strong advancements have been made with respect to single computer outsourcing, with \cite{BlockchainBasedDistributedComputingPaymentsZhang,OBFPLin,OptiSwapEckey} providing variations of such pairwise protocols.

These 2-player protocols involve several rounds of communication between the requester and computer. However, none discuss the problem of computer selection or rewarding in the presence of many competing computers. This limits their scope for outsourcing computations where the participation of many computers is required, such as in FL. 

Blockchain-based DC protocols  \cite{BlockchainDecentralisedFedLearningLi,FLChainBao,FLChainMajeed,FedLearningMeetsBlockchainMa,BlockFLKim,DeepChainWeng, BlockchainRepAwareFedLearningRehman} have been well-studied also. Unfortunately, all of these papers consider a blockchain or distributed system in which all parties share one utility, that is, the ability to use/benefit from a well-trained shared  model, which gives correct behaviour by definition. Such an assumption makes these protocols and their resulting analyses inappropriate in the presence of untrusted peers and/or asymmetric utilities. 

In \cite{IncentiveMechFLBlockchainToyoda}, a protocol and general framework for incentive mechanism design, within FL protocols where players measure utility tokenomically, are proposed. Computer registration and reward distribution must all be performed by players within the system. Computer registration is performed by an ``administrator", which prevents decentralization and encourages collusion. Furthermore, computation rewards are distributed based on votes of players within the system. The incentive compatibility of this choice is not considered, and is non-trivial to implement. In $\protocolName$, rewards are distributed deterministically as part of the smart-contract execution on requester inputs. We prove that rational requesters always submit messages correctly to the blockchain, and thus, correct rewarding is strong incentive compatible. Moreover, \cite{IncentiveMechFLBlockchainToyoda} has no mechanism to identify Byzantine computers and diminish/remove their ability to participate in the protocol. In $\protocolName$, this is achieved using reputations.

Recently, an extensive survey of existing attempts to construct privacy-preserving FL protocols\cite{privacyPreservingFLSurveyLiu} was published. Of the investigated works, the most promising for achieving an incentive-compatible decentralized protocol is \cite{FairnessIntegrityPrivacyBlockchainBasedFLSystemRuckel}.  In \cite{FairnessIntegrityPrivacyBlockchainBasedFLSystemRuckel}, ZK-proofs are used in the computing stage to prove that a computation was performed correctly. In a decentralized setting however, this raises many challenges, as each type of computation requires its own ZK circuit-generation/trusted set-up to generate the target function. In contrast, $\protocolName$ can be implemented using existing, blockchain-deployed ZK-tools and generalised rewarding functions. Moreover, as acknowledged by the authors of \cite{FairnessIntegrityPrivacyBlockchainBasedFLSystemRuckel}, although their methodology ensures models were trained correctly, it does not guarantee the models were trained on appropriate data. A proposed solution is using ``certified sensors", equivalent to TTPs, a non-viable solution in a decentralized setting. As the rewarding functions in $\protocolName$ reward players based on the relative quality of the results, and not just based on the fact that a series of computations has been performed correctly, independent of the data on which the computations are being performed, we are able to avoid this issue.

\section{Preliminaries}

\subsection{Blockchains}

In this paper, we are interested in a distributed set of $\numPlayers$ players $\playerSetDescription$ interacting with one and other inside a blockchain protocol. These players send and receive stake among one another, along with the functionality to encode programs to run within the blockchain protocol in the form of \textit{smart contracts}. A foundational assumption for the functioning of blockchain protocols is that a majority of players in the protocol act \textit{rationally}, following the protocol if they are incentivized to do so. The non-rational players are known as \textit{Byzantine}, and are modelled as deviating from the protocol either maliciously or randomly, and being controlled by a single adversary. This player model is known as the ByRa model \cite{SMRwithoutHonestPlayers}, which we use in this paper. The ByRa model is a necessary improvement on the legacy BAR model \cite{BARGossip} for the true consideration of incentives in distributed systems, removing any dependencies on altruistic, honest-by-default players which cannot be assumed to exist in incentive-driven protocols like blockchain/DC protocols.   

\begin{spacing}{1.5}
\begin{definition}\label{def:PlayerModel}
    The \textit{$\ourBAR$ model} consists of \textit{Byzantine} and \textit{Rational} players. A player is:
    \begin{itemize}
        \item \textit{Byzantine} if they deviate arbitrarily from a recommended protocol with unknown utility function. Byzantine players are chosen and controlled by an adversary $\adversary$.
        \item \textit{Rational} if they choose the strategy which maximizes their utility assuming all other players are rational.
    \end{itemize}
        
\end{definition}
\end{spacing}

It is the aim of this paper to construct a DC protocol that ensures rational players always follow the protocol, a property known as strong incentive compatibility in expectation. In this paper, rational players utility is measured in blockchain-based tokens. 
Based on similar assumptions to \cite{Hawk}, the blockchain protocol acts conceptually as a public ledger managed by a TTP. In reality, it is the following of the blockchain protocol by some majority of players using the blockchain that replicates this TTP. The protocol provides availability and correctness of the programs being run through the protocol, but does not provide privacy. That is, any player can observe the current state of all programs being run on the blockchain, and can verify that this state has been reached through the correct running of these programs. However, player inputs to these programs must be committed publicly to the blockchain before they can be passed to the smart contract, and as such, it will be an important requirement of designing a protocol involving smart contract interaction, through transactions, that the blockchain will accept these transactions in a timely fashion. In our system, this is achieved using incentivization. 

\ignore{
Smart contracts, like players, have an associated balance of stake. The sending of stake by a smart contract can only be done by satisfying specified logical conditions encoded in the smart contract. Examples of these conditions are the generation of small hash value by a player(Proof-of-work), the calling of the contract by members of a certain set of addresses, such as the contract writer, or the passing of a specified time/number of blocks (Time-lock). We consider smart contracts taking as input specifically-labelled transactions from players, containing payments and arbitrary textual data, and also the current height of the blockchain. The contract then produces some output, distributing, either to other contracts, players, or to burn, some amount of stake up to a maximum of the stake input to the contract. Only stake that is sent to a contract can be distributed by the contract.
}

\subsection{Zero-Knowledge Primitives}\label{sec:ZK}

The aim of this section is to outline existing \textit{non-interactive zero-knowledge} (NIZK) tools for set membership, such as those stemming from papers like \cite{ZerocoinGreen,ZCash,PairingBasedNIZKsGroth,ZKProofsSetMembershipBenarroch,SemaphoreWhitehat}, which are used in our privacy-enhanced DC protocol. We define these tools generically, allowing for the adoption of any secure NIZK set-membership protocol into $\protocolName$, as we only require a common functionality that is shared by all of them.

In the rest of the paper, we let $*$ denote any value. Computers privately generate two bit strings, the \textit{serial number} $\commSerialNum$ and \textit{randomness} $\commRandomness$, with $\commSerialNum, \ \commRandomness \in \{0,1\}^{O(\cryptoParam)}$ for some security parameter $\cryptoParam$. Computers then commit to these values by generating a commitment $\commitment \assign \commit(\commSerialNum,\commRandomness)$ where $\commit(*)$ is some cryptographically-secure commitment function (such as a Pedersen Commitment \cite{PedersenCommitments}). This $\commitment$ is then published to the blockchain, with the set of all commitments denoted by $\commitments$. With this in mind, we now define the key functions that allow us to reason about the NIZK proving of the membership of some commitment $\commitment$ in $\commitments$:

\begin{itemize}
    \item Verify$(\proofZK)$: For a ZK proof $\proofZK$, returns 1 if and only if $\proofZK$ is a valid proof of knowledge.  
    \item MemVerify$(\commitments,\commitment):$ Returns 1 if and only if $\commitment \in \commitments$.
    \item NIZKPoK$\{(\commitment, \commRandomness):$ MemVerify($\commitments, \commitment )=1 \, \And \commitment= \commit(\commSerialNum,\commRandomness) $ $\} \rightarrow \proofZK $: Returns a NIZK proof of knowledge of a commitment $\commitment$ and randomness $\commRandomness$ which satisfies  MemVerify$(\commitments,\commitment)$=1 and $\commitment= \commit(\commSerialNum,\commRandomness)$. The variables $(\commitment, \commRandomness)$ are assumed to be known only by the prover, while all other variables and functions are know by the verifier. Specifically, this function depends on the serial number $\commSerialNum$ being revealed. This revelation identifies to a verifier when a proof has previously been provided for a particular, albeit unknown, commitment as the prover must reproduce $\commSerialNum$. This is used in conjunction with an escrow to enforce the correct participation of computers in our privacy-enhanced DC protocol.
    \item NIZKSoK($m$)$\{(\commitment, \commRandomness):$ MemVerify$(\commitments,\commitment)=1 \, \And \commitment= \commit(\commSerialNum,\commRandomness) $ $\} \rightarrow \proofZK$: for an arbitrary message $m$, this function returns an NIZK proof which proves that the person who chose $m$ can also produce NIZKPoK$\{ $ $(\commitment, $ $ \commRandomness):$ MemVerify$(\commitments,\commitment)=1 \, \And \commitment= \commit(\commSerialNum,\commRandomness) $ $\}$. As such, NIZKSoK($m$) acts as a \textit{signature of knowledge} on $m$, as coined in \cite{ZerocoinGreen}.
\end{itemize}

In this paper, we assume the public NIZK parameters are set-up in a trusted manner\footnote{Such as a Perpetual Powers of Tau ceremony, as used in Zcash \url{https://zkproof.org/2021/06/30/setup-ceremonies/}}. 

\subsection{Relayers}\label{sec:Relayers}

A fundamental requirement for transaction submission in blockchains is the payment of some transaction fee to simultaneously incentivize block producers to include the transaction, and to prevent denial-of-service/spamming attacks. However, this allows for the linking of player transactions, balances, and their associated transaction patterns.
To counteract this, we utilize the concept of \textit{transaction relayers}\footnote{Ox \url{https://0x.org/docs/guides/v3-specification}, Open Gas Station Network \url{https://docs.opengsn.org/}, Rockside \url{https://rockside.io/}, Biconomy \url{https://www.biconomy.io/}}. 

When a computer wishes to submit a transaction anonymously to the block- chain, the computer publishes a proof of membership to the relayer mempool, as well as the desired transaction and a signature-of-knowledge cryptographically binding the membership proof to the transaction, preventing tampering. As the relayer can verify the proof of membership, the relayer can also be sure that if the transaction is sent to certain smart-contracts (such as those used in our DC protocols), the relayer will receive a fee.  As such, a model in which players submit transactions to relayers, who then in turn submit those transactions to the blockchain for a guaranteed fee is a straightforward extension of a model in which players submit transactions directly to the blockchain themselves. Furthermore, if the transaction contains a proof of membership that is NIZK and the message is broadcast anonymously (using the onion routing (Tor) protocol\footnote{\url{https://www.torproject.org/}} for example), the relayer can only infer that the player sending the transaction is a member of a particular set. 

\section{Constructing a Strong Incentive Compatible DC Protocol}\label{sec:SINCE}

For the sake of comprehensibility, we first describe an idealized protocol for the distribution of computation between a set of computers. We then demonstrate how to instantiate such a protocol using existing blockchain technology. In this description, we consider a requester who has a computation $\computation$ whose calculation the requester wishes to outsource to some subset of available computers $\computers$. Furthermore, the requester is aware of a threshold $\numComputersBound$ such that for any random sample of $\numComputersBound$ computers without replacement from $\computers$, a majority of those computers are rational.

We consider the output of all computations in this paper as a point in $l$-dimensional space. To reason about the goodness of a computation result, for every computation we assume the existence of a deterministic function which takes the set of computation responses, and given a majority of correctly computed results, outputs a target value $\target$, which is computed correctly with probability $1- \epsilon$, for some $\epsilon<0.5$. For deterministic calculations, the mode is such a function. In FL where results are model gradients, the Krum function \cite{KrumMLwithAdversariesBlanchard} is such an aggregation function. We let $\getDistance(x,y)$ be the Euclidean distance function. For $x,y \in \mathbb{R}^l$, $\getDistance(x,y)=\sqrt{(x_1-y_1)^2 +...+(x_l-y_l)^2}$.

Consider the set of computation results $\{\result_1,...,\result_\numComputers \}$. We require for any pair ($\result_{+}$, $\result_{-}$), with $\result_{+}$ a correctly computed result and $\result_{-}$ an incorrectly computed result the following holds for any $\marginOfError>0$:

\vspace{-0.5cm}

\begin{equation}\label{eq:BaseCase}
    P( \getDistance(\result_{+}, \target) < \marginOfError) > P( \getDistance(\result_{-}, \target) < \marginOfError).
\end{equation} 
Given this, for any subset of computation results taken in ascending order using the function $\getDistance$, the expected number of these computations being correctly computed is greater than those being incorrectly computed. With respect to deterministic computations, letting $\target$ be the median or mode, all correctly computed results will be distance 0 to $\target$. For non-deterministic computations, it is less clear. The function used to compute $\target$ must be chosen such that Equation \ref{eq:BaseCase} holds.

Now, consider the following DC protocol run by a trusted third party (TTP) who enforces the correct participation of all rational players. The proceeding sections then replace this TTP, in order to achieve full decentralization, through the use of a smart contract-enabled blockchain and a strong incentive compatible protocol ($\protocolName$, described in Section \ref{sec:MarvelAlgos}) to be run therein.

\textbf{Idealized distributed computing protocol}. Requesters repeatedly enter the system with independent functions for computation. A requester wishing to avail of the distributed computation of some function $\computation$ submits $\computation$ to the TTP, as well as some number of computers $\numComputers>\numComputersBound$. The TTP then selects $\numComputers$ from the set of available computers $\computers$. The computers respond to the TTP with their computation of $\computation$, who then sends all of the computation results to the requester. 

\vspace{0.25cm}
In a distributed setting, such TTPs do not exist. Therefore, to enforce the correct participation of rational players we need to utilize a mixture of cryptography and incentivisation. With this in mind, we first describe how to generically construct a reward mechanism such that rational players are strongly incentivized to follow the protocol. We then outline a reputation management protocol which maintains this incentivisation, while reducing the probability that incorrectly performing computers are selected for computation.

\subsection{Reward Mechanism}

In this section, we provide a theoretical lower-bound on the per-computer reward required to strongly incentivize the correct participation of rational computers in our DC protocol. We describe this reward generically in terms of a cost to compute a particular computation,  fees required to post a transaction to the blockchain, and the probability of being rewarded for good and bad computations. We also describe an escrow amount to be deposited by the requester when submitting the computation, which is received back after correctly computing the set of computers to reward. Running computations such as sorting arrays or encryption/decryption on-chain is expensive, so we initially give the requester the opportunity to correctly submit the set of computers to reward. Computers have an opportunity to contest this by depositing a bounty on-chain, triggering the on-chain verification of the reward set. Note that verification is much cheaper than computation, but with respect to Privacy $\protocolName$, this reveals some private information, which is why verification does not take place automatically. If the contest is valid, the requester loses the escrow. Otherwise, the computer loses the bounty. This is described in more detail in Section \ref{sec:MarvelAlgos}.

For a given computation $\computation$, we assume an accurate a-priori lower-bound on the cost to compute a particular computation $\computation$ of $\computationCost$. This lower-bound is known by all players in the system (in reality this value can be enforced by the protocol smart-contract). Given that the payment of $\txFee$ per transaction guarantees timely inclusion in the blockchain, rational computers perform the calculation of $\computation$ if and only if:
\begin{equation} \label{eq:rewardsGreaterThanCosts}
    E(\computationReward) > \computationCost + 2\txFee.
\end{equation}

To more accurately describe $\computationReward$, we introduce $0 \leq \probGoodRating,\probBadRating \leq 1$ with the probability of good computations being rewarded being $\probGoodRating$, and the probability of bad computations being rewarded being $\probBadRating$, such that WLOG $\probGoodRating>\probBadRating$. This gives an expected payoff of $\probGoodRating \computationReward - (\computationCost +  2\txFee)>0$ for following the protocol, and an expected payoff of $\probBadRating \computationReward -  2\txFee$ for submitting a result but not performing the computation. Therefore, we require:
\begin{equation}
    \probGoodRating \computationReward - \computationCost  -\probBadRating \computationReward  >0
\end{equation}

This reduces to
$\computationReward  > \frac{\computationCost}{\probGoodRating -\probBadRating}.$
The exact values of $\probGoodRating$ and $\probBadRating$ depend on the computation, number of computers to be rewarded and the chosen target function. Exact values for  $\probGoodRating$ and $\probBadRating$ are difficult to predict a-priori. For deterministic computations $\probGoodRating \approx 1$, whereas for non-deterministic computations such as FL, $\probGoodRating$ will be smaller. Lower-bounding the possible value of $\probGoodRating -\probBadRating$ (although greater than 0) and using this value to compute $\computationReward$ ensures rational players follow the protocol.

Additionally, in a smart-contract blockchain, actions must be triggered by one or more players by sending transactions to the blockchain. In $\protocolName$, the distribution of rewards is initially controlled by the requester (if the requester fails to trigger correct reward distribution, computers can eventually initiate the rewarding mechanism). To ensure a rational requester correctly submits the set of good computations to be rewarded, any positive escrow amount $\escrowRequester>\txFee$ suffices to strongly incentivize the requester. This can be seen as the payoff for submitting the correct set is $\escrowRequester-\txFee>0$, while the payoff for not submitting the set is 0. However, in the case of potential collusion of up to $k$ computers, setting $\escrowRequester \geq k\cdot \computationReward+\txFee$ guarantees that the requester correctly submits the set of good computations. If $k$ is set too small by the protocol/smart contract, not submitting the reward set, and rewarding all players may be positive expectancy for the requester. Setting $k$ equal to $\numComputers$ conservatively achieves this.
With this lower bound on $\escrowRequester$, rational requesters always submit the correct set of good computations to the blockchain. 

\subsection{Reputation Management Protocol}\label{sec:Reputation}

In the previous section, we identified that all rational players in the system follow the protocol given no changes to reputation. However, the use of a reputation-based selection process prioritizes good computers over bad computers, meaning both short- and long-term benefits for correctly behaving computers. Therefore, using reputation-based computer selection is desirable.
In this section we describe a reputation management protocol that maintains the incentive compatibility of a DC protocol with an incentive compatible reward mechanism.

We consider computations for which a rating mechanism rates results as either ``good" or ``bad" exists. Correctly performed computations are rated good with probability $\probGoodRating$, while incorrectly computations are rated good with probability $\probBadRating$. We construct a function from this rating mechanism, $\rate()$, which assigns good calculations a score of 1, and bad calculations a score of 0. For a player $\playeri$ taking part in computations for $\computation_{1}, \computation_2,..., \computation_k $, $\playeri$'s \textit{base reputation} is $\baseRep_i=\sum_{j=1}^k \rate(\computation_j)$.

Let $\initRep>0$ be the starting reputation for computers registering in the system. For a given computation, $\playeri$ is selected as computer for a computation in block at height $\blockchainHeight$ in the blockchain in direct proportion to $\baseRep^{\blockchainHeight-1}_i (\initRep -1)$ as a fraction of $\sum^{\numPlayers}_{j=1} \baseRep^{\blockchainHeight-1}_j - \numPlayers \cdot (\initRep -1)$. We subtract ($\initRep$-1) from $\baseRep^{\blockchainHeight-1}_i$ to normalize the base reputations for selection probability purposes, and so there is no benefit for computers rejoining, particularly in the case where base reputations are increasing over time. With this in mind, the number of computations a player $\playeri$ is selected for is directly proportional to:

\begin{equation}\label{eq:SelectionProb}
    \selectionProb^\blockchainHeight_i=\frac{\baseRep^{\blockchainHeight-1}_i- (\initRep -1)}
    {\sum^{\numPlayers}_{j=1} \baseRep^{\blockchainHeight-1}_j - \numPlayers\cdot (\initRep -1)}.
\end{equation}

Consider a player $\player_i$ who includes a good computation as block proposer for a computer. This increases that computers base reputation, and thus that computers $\selectionProb$. This has long-term stake implications, as discussed at the beginning of Section \ref{sec:Reputation}. Although the computation result is encrypted, the expected change in $\selectionProb$ for an included computer is positive, which negatively affects the $\selectionProb$ of the block proposer. Therefore, we need to reward the proposer with an increase in base reputation to counteract the increase in the computers expected increase in base reputation. Let $\expectedBaseRepChange>0$ be the expected change in base reputation for a computer whose computation gets included on the blockchain. 

For $\playeri$ a proposer of a block that includes $k$ transactions containing computation results, we need the following equality to hold:
\vspace{-0.5cm}
\ignore{
\begin{equation}
\begin{split}
         \selectionProb^\blockchainHeight_i &= \frac{\baseRep^\blockchainHeight_i-(\initRep -1)}{\baseRep^\blockchainHeight_i+ \sum_{j \neq i} \baseRep^{\blockchainHeight-1}_j + k\cdot\expectedBaseRepChange - \numPlayers\cdot (\initRep -1)}  \\ 
        &= \frac{\baseRep^{\blockchainHeight-1}_i-(\initRep -1)}{\baseRep^{\blockchainHeight-1}_i+\sum_{j \neq i} \baseRep^{\blockchainHeight-1}_j - \numPlayers \cdot (\initRep -1)} .
\end{split}
\end{equation}

Solving for $\baseRep^\blockchainHeight_i$ in this equality gives:}

\begin{multline} \label{eq:reputationChange}
    \baseRep^\blockchainHeight_i= \baseRep^{\blockchainHeight-1}_i + \\  k\cdot\expectedBaseRepChange \big(  \frac{\baseRep^{\blockchainHeight-1}_i-(\initRep -1)}{\sum_{j \neq i} \baseRep^{\blockchainHeight-1}_j - (\numPlayers-1) \cdot (\initRep -1)} \big).
\end{multline}

This means we need to add $k\cdot\expectedBaseRepChange \big(  \frac{\baseRep^{\blockchainHeight-1}_i-(\initRep -1)}{\sum_{j \neq 1} \baseRep^{\blockchainHeight-1}_j - (\numPlayers-1) \cdot (\initRep -1)} \big)$ to $\baseRep^{\blockchainHeight-1}_i$ in order to ensure the proposer is impartial, with respect to reputation and computer selection probability, to adding  transactions containing computation results to the blockchain. 

For transactions from the requester finalising the rewards, we simply have to replace $\expectedBaseRepChange$ with the actual mean change in reputation in Equation $\ref{eq:reputationChange}$, and the rest of the numbers stay the same. 

\section{\texorpdfstring{$\protocolName$}{TEXT}}\label{sec:MarvelAlgos}

The goal of this section is to take the ideal DC functionality of Section \ref{sec:SINCE}, and implement it as a set of algorithms encoded as smart-contracts that can be run by a decentralized (without a TTP) set of players with access to a blockchain. We call this protocol $\protocolName$. In Section \ref{sec:SINCE}, we identified values for computer rewards, reputations changes and computer/requester escrows which ensure the participation of rational players if they can be enforced. In this section, we describe how all of these values can be enforced using a blockchain, and as such, that it is possible to implement the idealized DC protocol in a fully decentralized manner. 

\subsection{Algorithmic Overview}\label{sec:AlgoOverview}

We outline the $\protocolName$ protocol as a set of pseudo-code smart contracts encodings provided in Algorithms \ref{alg:SMRProgram} and \ref{alg:ComputerSelectionProtocol}. These contracts are labelled: \textit{Register}, \textit{Request}, \textit{Response} and \textit{Finalize}. A Solidity implementation of $\protocolName$ has also been made publicly available on Github \cite{MarvelDCGithub}.

A $\protocolName$ instance can spawn an indefinite number of computation instances, each initialized by calling the Request contract, and lasting at least $\revealTXDelay$ blocks, where $\revealTXDelay$ is the number of blocks required for players to observe an event on-chain, send a transaction and have that transaction committed on-chain given at least $\txFee$ is paid. We provide here the intuition to these encodings, including a graphic representation of the protocol flow in Figure \ref{fig:protocolFlow}. A privacy-enhancing implementation of $\protocolName$ is then described in Section \ref{sec:PrivacyMarvel}.

\begin{figure}
	\includegraphics[width=1\textwidth]{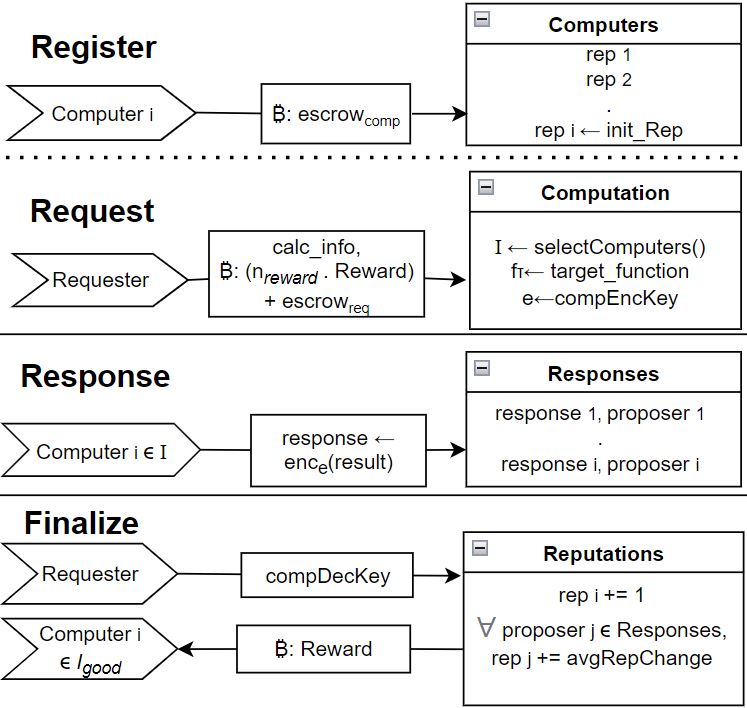}
	\caption{$\protocolName$ information and token flow. ${\bitcoin:}$ indicates the transfer of tokens.}
	\label{fig:protocolFlow}
\end{figure}

Each player $\playeri$ owns (has exclusive access to) a set of token balances $\balance_i$ which are stored as a globally accessible variable on the blockchain.
For a token $\bitcoin$, $\balance_i(\bitcoin)$ is the amount of token $\bitcoin$ that $\playeri$ owns. 
Players in the underlying blockchain protocol can enter $\protocolName$ as computers by calling the Register contract, which for a given computer deposits an escrow $\escrowComputer$ (line \ref{line:depositEscrowComp}), granting that computer a reputation of $\initRep$ (line \ref{line:initialRep}).

Computation request instances are initialized by calling the Request contract, which specifies the computation details $\computation$, the number of computers to be selected for the computation $\numComputers$, a deterministic function $\targetFunction$ for selecting the target result $\target$ from the set of results, the number of computers to reward $\numToReward$, and the per-computer reward $\playerReward$ received by a computer if included in the set of computers to reward. The requester deposits $\numToReward\cdot\playerReward +\escrowRequester$. The $\compEncKey$ is the public key corresponding to the temporary public/private key pair ($\compEncKey, \compDecKey$). This is a key pair generated by the requester specifically for $\computation$. A randomness beacon is called (line \ref{line:getComputers}), which provides a pseudo-random seed for selecting $\numComputers$ computers to participate in the computation in direct proportion to computer reputations (line \ref{line:func:SelectComps}), with these computers listed in the set $\computation.\indicesForComputation$. 

Computers selected for a particular computation $\computation$, identified in $\computation.\indicesForComputation$, can then submit results for $\computation$ to the blockchain by calling the Response contract for up to $\revealTXDelay$ blocks after the computation is requested. These results should be encrypted using $\computation.\compEncKey$ (line \ref{line:SubmitResponse}). These encrypted results are decrypted in the Finalize phase (line \ref{line:decrypt}) for rewarding purposes. This encryption ensures no other computer can use another computer's result, and therefore must themselves perform the computation. Given a valid response is recorded, the block producer corresponding to the response is added to $\computation.\computationProposers$ (line \ref{line:addTXProposer}). This is used later to update reputations, in line with the analysis of Section \ref{sec:Reputation}.

Then, either after $\revealTXDelay$ blocks from when the computation $\computation$ was requested, or when all computers in $\computation.\indicesForComputation$ have responded, the requester of $\computation$ can complete the request by calling the Finalize contract. Calling the Finalize contract requires the requester to provide the decryption key $\computation.\compDecKey$ corresponding to $\computation.\compEncKey$. If these keys match up, the requester receives back her escrow $\escrowRequester$. The contract then uses $\compDecKey$ to decrypt the computer responses, and identify which computers are to be rewarded (line \ref{line:func:rateComputations}). This is done by applying the pre-specified target function to the computation results, and computing a target value $\target$ (line \ref{line:setTargetResult}). The computers corresponding to the $\computation.\numToReward$ results closest to $\target$ using Euclidean distance are selected as the computers to reward, $\goodComputers$ (line \ref{line:getBestResults}). The computers in $\computation.\goodComputers$ each receive $\computation.\playerReward$. Finally, all registered computers in $\computation.\computationProposers$ receive reputation increases of $\avgRepChange$ (line \ref{line:func:updateReps}), while computers in $\computation.\goodComputers$ each receive an increase in reputation of 1 (line \ref{line:goodCompRepChange}).

\begin{algorithm}
\def\baselinestretch{1} \scriptsize \raggedright
\caption{\texorpdfstring{$\protocolName$}{TEXT} smart contract pseudocode}
\label{alg:SMRProgram} 
	\begin{algorithmic}[1] 
	    
	    \State $\computers \assign [] $ \Comment{Set of active computers}
	    \State $\initRep \assign \textit{getInitialRep}()$ 
	    \State $\reputations \assign [\initRep \ \textbf{for} \ i \in \computers]$
	    
	    \State $\revealTXDelay \assign \getFinalizeDeadline()$ \Comment{Globally-defined finalize deadline} \label{line:finalizeDeadline}
	    \State $\escrowComputer, \escrowRequester \assign \getEscrows()$ \Comment{Globally-defined escrow amounts, in line with Section \ref{sec:SINCE}}
	    \State $\numComputersBound \assign \textit{getMinNumComputersPerComputation}()$ \Comment{Set $\numComputersBound$ in-line with requirements from Section \ref{sec:SINCE}}
	    \State $\targetFunctions \assign \textit{getTargetFunctions}()$ \Comment{Define allowable target functions. In reality, this can be updated during the protocol}

	    \SPACE 
	    
	    \State \textbf{Register}
	    
	    \Upon{$\langle \tagRegister \rangle \ \From  \ \player \ \With \ \player \notin \computers \  \logicalAnd \ \player\getBalance > \escrowComputer $}{}\label{func:Register}\Comment{add computer to the system}
	        \State $\player\getTransfer(\escrowComputer , \contract)$ \Comment{Registration cost to prevent Sybil attacks} \label{line:depositEscrowComp}
		    \State $\computers\getAppend(\player)$
		    \State $\reputations\getAppend(\initRep)$  \label{line:initialRep}
		\EndUpon
	    
	    \SPACE 
	    
	    \State \textbf{Request}
	    
		\Upon{$\langle \tagRequest , \computation, \numComputers, \numToReward, \compEncKey, \playerReward, \targetFunction \rangle \ \From  \ \requester \ \With \ \numComputers> \numComputersBound \ \logicalAnd \  \requester\getBalance >  ((\numToReward,\cdot \playerReward) + \escrowRequester ) $ $\logicalAnd$ $ \targetFunction \in \targetFunctions $}{}\label{func:Request}
		    \State $\requester\getTransfer((\numToReward, \cdot \playerReward) + \escrowRequester, \contract)$ \Comment{Transfer total reward plus requester escrow to contract}
		    \State $\indicesForComputation \assign \selectComputers(\getRandomness(), \numComputers)$ \Comment{Select computers for computation} \label{line:getComputers} 
		   
		    \State $\responses \assign []$
		    \State $\computationProposers \assign []$ \Comment{Array of the players who recorded each $\langle \tagResponse, * \rangle $ transaction}
		    \State $\computationStart \assign \currentHeight$  \Comment{Record current height of blockchain}
		    \State $\step  \assign \stepComputing$
		\EndUpon

		\SPACE 
		
		\State \textbf{Response}
		
		\Upon{$\tx \assign \langle \tagResponse, \computation, \computationResult \rangle \ \From \ \computer \in \indicesForComputation $ $\ \With \  \computation.\step = \ \stepComputing \ \logicalAnd \ \currentHeight $ $ \ < \computation.\computationStart + \revealTXDelay  \  $  } \label{func:RESPONSE}
			\State $\computation.\responses\getAppend(\computationResult) $ \Comment{$\computationResult$ should be the computer $\computer$'s result of computing $\computation$, encrypted using $\computation.\compEncKey$}
			\If{$\tx.\blockProposer \in \computers$} 
			    \State $\computation.\computationProposers\getAppend(\tx.\blockProposer)$ \label{line:addTXProposer}
			\EndIf
		\EndUpon
		
		\SPACE 
		
		\State \textbf{Finalize}
		
		\Upon{$\tx \assign \langle \tagFinalize, \computation, \compDecKey \rangle \ \From \ \computation.\requester $ $\ \With \ \isValid(\compDecKey, \computation.\compEncKey) $ $ \ \logicalAnd \  \big( ( \computation.\step = \ \stepComputing \ \logicalAnd \ \currentHeight $ $ \ < \computation.\computationStart + \revealTXDelay  \ $ $\logicalAnd \ \getLength(\computation.\responses)=\computation.\numComputers ) $ $ \ \logicalOr \ (\computation.\step = \ \stepComputing \ $ $\logicalAnd \ \currentHeight  \ \geq  \computation.\computationStart + \revealTXDelay) \big) $  } \label{func:FINALIZEsuccess}
		    \State $\computation\getTransfer(\escrow, \computation.\requester)$ \Comment{Returns the escrow to the requester} \label{line:requesterRefund}
		    \State $\computation.\computationProposers\getAppend(\computation.\requester)$ \Comment{Required for SINCE of requester}
		    \If{$\tx.\blockProposer \in \computers$} \Comment{Required for SINCE of proposers}
			    \State $\computation.\computationProposers\getAppend(\tx.\blockProposer)$
			\EndIf
			\State $\computation.\step  \assign \stepFinalized$
			
			\State $\goodComputers, \badComputers \assign \rateComputations(\computation, \computation.\numToReward, \compDecKey, \computation.\targetFunction)$ \Comment{Function which deterministically evaluates the goodness of returned computations, returning the indices of good and bad computers}
			\State $\computation\getTransfer(\computation.\playerReward, \goodComputers)$ \label{line:payGoodComps}
			\State $\updateReputations(\goodComputers, \badComputers, \computation) $  \label{line:updateReps}
			
		\EndUpon
		
		\SPACE 

	\end{algorithmic} 
	
\end{algorithm}

\begin{algorithm}
\def\baselinestretch{1} \scriptsize \raggedright
\caption{Computer Selection Protocol}
\label{alg:ComputerSelectionProtocol} 
	\begin{algorithmic}[1]

        \Function{$\genProbSelect()$}{} \label{line:func:generateSelctionProbabilities}
		\State $\minRep \assign \getMinimum(\reputations)$
		\State $\denominator \assign \sum(\reputations) - \getLength(\reputations)\cdot(\minRep-1)$
		\State \textbf{return} $[\big( (i- (\minRep-1))/ \denominator \big) \ \textit{for} \ i \in \reputations]$ \Comment{Probability formula from Section \ref{sec:Reputation}}
		\EndFunction
		
		\SPACE
		
		\Function{$\selectComputers(\randomSeed,\numComputers)$}{} \label{line:func:SelectComps}
		\State $\counter \assign 0$
		\State $\indicesForComputation \assign []$
		\State $\probSelect \assign \genProbSelect(\reputations)$
		\State $\randomSeed \assign \hash(\randomSeed)$
		\While{$\counter < \numComputers$} \label{line:drawfromUrn}
		    \State $i\assign 0$
		    \State $\sumReps \assign \probSelect[i]$
		    \While{$\randomSeed > \sumReps$} \label{line:findComputer}
		        \State $\sumReps\assign \sumReps + (\probSelect[i++]*(2^256))$
		    \EndWhile
		    \If{$\logicalNot(i \in \indicesForComputation)$} \label{line:uniqueCompFound}
		        \State $\indicesForComputation\getAppend(i)$
		        \State $\counter \assign \counter +1 $
		    \EndIf
		    \State $\randomSeed \assign \hash(\randomSeed)$
		\EndWhile
		\EndFunction

	\end{algorithmic} 
	
\end{algorithm}

\begin{algorithm}
\def\baselinestretch{1} \scriptsize \raggedright
\caption{Reputation Management}
\label{alg:RepMgmtProtocol} 
	\begin{algorithmic}[1]

		\Function{$\rateComputations(\computation, \numToReward,  \compDecKey, \targetFunction)$}{} \label{line:func:rateComputations}
		\State $\goodComputers \assign []$
		\State $\results \assign \decrypt(\computation.\responses, \compDecKey)$ \label{line:decrypt}
		\State $\target \assign \targetFunction(\results)$ \label{line:setTargetResult}
		\For{$ i \ \in \ [1,..., \numToReward]$} \Comment{add the $\numToReward$ closest results to $\target$ to $\goodComputers$} \label{line:getBestResults}
		  \State $\goodComputers\getAppend(\result.\computer) \logicalAnd \results.\textit{remove}(\result)$ $\With \getDistance(\result,\target) = $ $ \getMinimum(\getDistance(\results,\target))$
		  
		\EndFor
		\State $\badComputers \assign \results.\computer$ \Comment{all results not already removed in the for loop are bad results, not to be rewarded}
		\State \textbf{return} $\goodComputers, \ \badComputers$
		\EndFunction
		
		\SPACE
		
		\Function{$\updateReputations(\goodComputers, \badComputers, \computation)$}{}
		\State $\avgRepChange \assign \getLength(\goodComputers)/(\getLength(\goodComputers)+\getLength(\badComputers))$
		\State $\denominator \assign \sum(\reputations) - (\getLength(\reputations)-1)\cdot(\initRep-1)$
		\For{$\blockProposer \ \in \ \computation.\computationProposers$} \label{line:func:updateReps} \Comment{in-line with the results from Section \ref{sec:Reputation}, block proposers rep. changes should be done before updating computers}
		    \State $\reputations[\blockProposer] \assign $ $\reputations[\blockProposer]+ $ $ (\avgRepChange \cdot \big( (\reputations[\blockProposer]-$ $ (\initRep-1))/ (\denominator-\reputations[\blockProposer]) \big))$ \Comment{Necessary for SINCE of proposers/requester}
		\EndFor
		
		\State $\reputations[\computation.\requester] \assign $ $\reputations[\computation.\requester] + $ $ (\avgRepChange \cdot \big( (\reputations[\blockProposer]-$ $ (\initRep-1))/ (\denominator-\reputations[\blockProposer]) \big))$ \Comment{Requester of successfully resolved computation must receive increase in reputation, in line with Section \ref{sec:Reputation}} 
		\State $\reputations[\tx.\blockProposer] \assign $ $\reputations[\tx.\blockProposer] + $ $ (\avgRepChange \cdot \big( (\reputations[\blockProposer]-$ $ (\initRep-1))/ (\denominator-\reputations[\blockProposer]) \big))$ \Comment{Proposer including the Finalize transaction must also receive increase in reputation, in line with Section \ref{sec:Reputation}}

		\State $\reputations[\goodComputers] \assign \reputations[\goodComputers]+1$ \label{line:goodCompRepChange}

		\EndFunction
		
	\end{algorithmic} 
	
\end{algorithm}

\begin{algorithm}
\def\baselinestretch{1} \scriptsize \raggedright
\caption{\texorpdfstring{$\protocolName$}{TEXT} for player $\playeri$ as a Requester}
\label{alg:RequesterProtocol} 
	\begin{algorithmic}[1] 
	    
	    \Function{$initialize(\computation)$}{}\label{func:Initialise}
	        \State $\compEncKey, \compDecKey \assign \textit{generateKeyPair}()$
	        \State $\playerReward \assign $ SINCE reward to guarantee participation of computers 
	        \State $\numComputers \assign x \ \With x >  \numComputersBound$ 
	        \State $\targetFunction \textit{getTargetFunction}(\computation)$ \Comment{Select target function for computation}
	        \State \Broadcast $\langle \tagRequest , \computation, \numComputers, \numToReward, \compEncKey, \playerReward, \targetFunction \rangle$
		\EndFunction

		\SPACE 
		\Upon {$ \getLength(\computation.\responses)=\computation.\numComputers \ \logicalOr \ ( \currentHeight  = \computation.\computationStart + \finalizeDeadline ) $  } \label{func:RequesterFINALIZE}
			\State \Broadcast $\langle \tagFinalize, \computation, \computation.\compDecKey \rangle$
		\EndUpon
		
	\end{algorithmic} 
	
\end{algorithm}

\begin{algorithm}
\def\baselinestretch{1} \scriptsize \raggedright
\caption{\texorpdfstring{$\protocolName$}{TEXT} for player $\playeri$ as a computer}
\label{alg:ComputerProtocol} 
	\begin{algorithmic}[1] 
	    
	    \State \Broadcast $ \langle \tagRegister \rangle $
	    
		\Upon{$\langle \tagRequest , \computation, \numComputers, \numToReward, \compEncKey, \playerReward, \targetFunction \rangle \ \With \ i \in \computation.\indicesForComputation$ }{}\label{func:ComputerAction}
		    
		    \State \Broadcast $ \langle \tagResponse, \computation, \computationResult \assign \encrypt(\compute(\computation), \computation.\compEncKey) \rangle $ \label{line:SubmitResponse}
		\EndUpon
		
	\end{algorithmic} 
	
\end{algorithm}

\subsection{Protocol Properties}\label{sec:properties}

In the following results, we make the assumption that rational requesters randomly enter the system, running unique instances of the Request contract. Under this assumption, we first show that rational computers and rational requesters are strongly incentivized to participate in the protocol. 

\begin{spacing}{1.5}
\begin{theorem}\label{thm:SINCE}
    There is a strict Nash Equilibrium in which, for any computation with a per player reward $\playerReward > \frac{\computationCost}{\probGoodRating -\probBadRating}$, rational computers and requesters follow the protocol.
\end{theorem}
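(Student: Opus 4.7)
The plan is to proceed by case analysis over the two types of rational players named in the statement—computers and requesters—showing that for each, any unilateral deviation from the protocol yields strictly lower expected utility than following it, conditional on all other rational players following. Because rewards and escrow refunds are mechanically enforced by the smart contract of Algorithm \ref{alg:SMRProgram}, it suffices to compare payoffs over the finite set of actions each player can take at each decision point in Algorithms \ref{alg:RequesterProtocol} and \ref{alg:ComputerProtocol}.

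First I would handle a computer $\playeri$ who has been registered and selected into $\computation.\indicesForComputation$. The relevant actions are: (a) compute $\computation$ correctly and broadcast the encrypted result; (b) broadcast an arbitrary response without performing the computation; (c) stay silent. Using the good/bad rating probabilities $\probGoodRating > \probBadRating$ from Section \ref{sec:SINCE}, the three expected payoffs are
\[
(a)\colon \probGoodRating \playerReward - \computationCost - 2\txFee, \qquad (b)\colon \probBadRating \playerReward - 2\txFee, \qquad (c)\colon 0.
\]
Given the hypothesis $\playerReward > \frac{\computationCost}{\probGoodRating-\probBadRating}$, the inequality $(\probGoodRating-\probBadRating)\playerReward > \computationCost$ rearranges to show (a) strictly dominates (b); comparison with (c) follows from the lower-bound chain $\probGoodRating\playerReward > \computationCost + \probBadRating\playerReward \geq \computationCost$ together with Equation \ref{eq:rewardsGreaterThanCosts}, which constrains the global parameters. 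Strict dominance extends to all mixed strategies and is only reinforced by the positive marginal boost to $\selectionProb^{\blockchainHeight}_i$ conferred on computers added to $\goodComputers$, as required in Section \ref{sec:Reputation}.

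For a requester who has invoked Request (and thus deposited $\escrowRequester$ alongside the reward pool), the relevant decision is whether to broadcast $\langle \tagFinalize, \computation, \compDecKey \rangle$ with the correct decryption key, or to deviate by abstaining, submitting a malformed key, or colluding with a subset of computers. A correct Finalize refunds $\escrowRequester$ at cost $\txFee$, while any deviation forfeits the escrow entirely. The conservative lower bound $\escrowRequester \geq \numComputers\cdot\playerReward + \txFee$ chosen in Section \ref{sec:SINCE} exceeds the total reward budget $\numToReward\cdot\playerReward$, so even a fully colluding coalition offering the requester the entire reward pool in exchange for miscertification cannot match the on-chain refund. Hence broadcasting a correct Finalize strictly dominates every alternative.

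The main obstacle will be transferring strict dominance at the level of individual computers and requesters into an equilibrium at the level of the full protocol: if block proposers could censor Response or Finalize transactions at zero cost, the computer-level analysis above would collapse, as a censored response is economically equivalent to action (c). The reputation-adjustment formula of Equation \ref{eq:reputationChange}, applied in the $\updateReputations$ routine of Algorithm \ref{alg:RepMgmtProtocol}, was engineered precisely so that the proposer's selection share is invariant to whether she includes a computation transaction while she still collects $\txFee$; invoking this compensation (and substituting the realized average rating change in place of $\expectedBaseRepChange$ at finalization) yields strict preference for inclusion and closes the chain of strict best responses, producing the claimed strict Nash equilibrium.
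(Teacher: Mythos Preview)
Your proposal is correct and follows essentially the same route as the paper: treat the requester first (escrow refund strictly dominates forfeiting it) and then the selected computer (the hypothesis $\playerReward > \frac{\computationCost}{\probGoodRating-\probBadRating}$ makes correct computation strictly dominate submitting an uncomputed response), concluding strict best response for both roles. Your version is more explicit---you enumerate the silent action and close it off via Equation~\ref{eq:rewardsGreaterThanCosts}, and you fold in the block-proposer indifference argument from Section~\ref{sec:Reputation} that the paper's proof leaves implicit---but the core decomposition and the key inequalities are identical.
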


\begin{proof}
    Consider a Request( $\requester, *$) instance corresponding to a computation $\computation$, and computers selected for computation $\indicesForComputation$. Based on $\numComputers> \numComputersBound$, the majority of computers in $\indicesForComputation$ are rational. 
    
    First consider a rational requester. Correctly running Finalize ($ \computation, *$) allows the requester to receive back $\computation.\escrowRequester$, and as such, rational requesters follow the protocol.
    
    Consider now rational computers. If the requester correctly runs Finalize ($ \computation, *$), then $\computation.\target$ and $\computation.\goodComputers$ are generated correctly. Therefore, if all rational computers follow the protocol, the assumption under which we chose $\playerReward$ in Section \ref{sec:SINCE}, for a given rational computer $\computer_i$ correctly running Repsonse($ \computation, $ $ *$), $\computer_i$ is included in $\computation.\goodComputers$ with probability $\probGoodRating$. If $\computer_i$ incorrectly runs Response($ \computation, $ $ *$), $\computer_i$ is included in $\computation.\goodComputers$ with probability of at most $\probBadRating$. By our choice of $\playerReward$, we have seen in Section \ref{sec:SINCE}, given $\computation.\goodComputers$ is generated correctly and computers included in $\computation.\goodComputers$ receive this with probability 1, this is sufficient for rational computers to compute the result correctly, equivalent to calling Response($ \computation, *$).
    
    Therefore, rational computers and requesters follow the protocol if $\playerReward > \frac{\computationCost}{\probGoodRating -\probBadRating}$
\end{proof}
\end{spacing}

This result is enough to ensure rational players follow the $\protocolName$ protocol. However, because of the use of the same reputation and computer-selection mechanism as described in Section \ref{sec:Reputation}, $\protocolName$ also guarantees that Byzantine computers are selected with diminishing probability in the number of computations, converging to 0 for any minority of selected computers. This is stated formally in the following lemma. 

\begin{spacing}{1.5}
\begin{lemma}\label{lem:diminishingProb}
    For a series of computations $[\computation_1, \computation_2, ..., \computation_i]$ with $\playerReward > \frac{\computationCost}{\probGoodRating -\probBadRating}$ and $\numComputers >  \numComputersBound$, as the number of completed computations increases, the probability of selecting a Byzantine computer for a computation with $\numComputers < \frac{|\computers|}{2}$ is strictly decreasing in expectancy and converging to 0 as $i$ tends to infinity. 
\end{lemma}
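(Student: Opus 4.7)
The plan is to reduce the claim to showing that the fraction of total normalized reputation held by Byzantine computers converges to zero in expectation, since by Equation~(\ref{eq:SelectionProb}) every computer is drawn with probability proportional to its normalized reputation $\baseRep - (\initRep-1)$, and $\numComputers < |\computers|/2$ means only a minority of computers is selected per computation.

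First, I would invoke Theorem~\ref{thm:SINCE}: since $\playerReward > \computationCost/(\probGoodRating - \probBadRating)$, every rational computer selected for a computation follows the protocol and is rated good with probability $\probGoodRating$. A Byzantine computer, by the definition of the $\ourBAR$ model, may deviate arbitrarily, but its output is accepted with probability at most $\probBadRating < \probGoodRating$. Hence per selection the expected increment to the normalized reputation is $\probGoodRating$ for a rational computer versus at most $\probBadRating$ for a Byzantine one.

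Next, I would analyze the Byzantine reputation fraction $\rho_i = B_i/(R_i+B_i)$, where $R_i, B_i$ denote the aggregated normalized reputations of the rational and Byzantine sets after the $i$-th completed computation. Combining the proportionality of selection with the per-selection growth rates above, and using that the proposer/requester reputation updates in Algorithm~\ref{alg:RepMgmtProtocol} are designed in Section~\ref{sec:Reputation} so as to leave every $\selectionProb$ invariant in expectation, one obtains a drift of the form
\[
E[\rho_{i+1} - \rho_i \mid \rho_i] \le -\,(\probGoodRating - \probBadRating)\, \frac{\numComputers\, \rho_i(1-\rho_i)}{R_i+B_i}.
\]
Because $\rho_0 < 1/2$ (the threshold $\numComputersBound$ guarantees a rational majority among the available computers) and $\rho_i \in [0,1]$, this strictly negative drift makes $\rho_i$ a bounded supermartingale on $(0,1)$, so $\rho_i \to 0$ almost surely by Doob's martingale convergence theorem and in particular $E[\rho_i] \to 0$. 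A union bound over Byzantine computers then gives that the probability any Byzantine computer is selected in computation $i+1$ is at most $\numComputers \cdot \rho_i$, which tends to $0$, completing the argument.

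The main obstacle I expect is establishing the drift inequality in the second step, because reputation both drives and is driven by selection, and the update to $\rho_i$ is coupled with a stochastic without-replacement draw and with reputation bumps to block proposers and the requester. I plan to handle this by invoking the invariance property of Section~\ref{sec:Reputation} (which neutralizes the proposer/requester bumps in expectation), and by absorbing the without-replacement correction into an $O(\numComputers^2/|\computers|)$ term that is lower order and does not affect the sign of the drift once $\probGoodRating - \probBadRating$ is fixed.
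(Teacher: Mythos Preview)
Your route is genuinely different from the paper's. The paper does not set up a supermartingale at all: it writes the Byzantine selection share after step $k$ as $\adversarialBound_k$, derives the one-step recursion
\[
\adversarialBound_{k+1} \;=\; \frac{\adversarialBound_k\big(|\computers|\cdot\initRep + \numComputers\,\probBadRating\big)}{|\computers|\cdot\initRep + \numComputers\,\probGoodRating - \adversarialBound_k\,\numComputers(\probGoodRating-\probBadRating)} \;=\; r_k\,\adversarialBound_k,
\]
checks that $r_k<1$ and that $r_k$ is decreasing in $k$, and concludes $\adversarialBound_k < \adversarialBound_0\, r_0^{\,k}\to 0$. So the paper gets geometric decay of the expected share by a direct deterministic recursion, without any drift or martingale-convergence step. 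Your approach is more careful about the stochastic coupling (and about the proposer/requester bumps, which the paper's proof simply ignores), but it trades away the explicit rate.

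There is, however, a real gap in your final step. Doob's theorem only gives $\rho_i\to\rho_\infty$ a.s.\ for \emph{some} random limit; it does not force $\rho_\infty=0$. Your drift $-(\probGoodRating-\probBadRating)\,\numComputers\,\rho_i(1-\rho_i)/(R_i+B_i)$ vanishes at both endpoints, and since $R_i+B_i$ grows linearly the drift is of order $1/i$. One can argue from divergence of $\sum 1/i$ that $\rho_\infty\in\{0,1\}$ a.s., but the supermartingale inequality then only yields $P(\rho_\infty=1)\le\rho_0$, not $0$, so ``in particular $E[\rho_i]\to 0$'' does not follow. You would need an additional barrier or coupling argument to exclude $\rho_\infty=1$ with positive probability; the paper avoids this issue entirely by working with the deterministic recursion for the expected share rather than the pathwise process.
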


\begin{proof}
    As  $\playerReward > \frac{\computationCost}{\probGoodRating -\probBadRating}$, from Theorem \ref{thm:SINCE} rational computers follow the protocol. Let $\adversarialBound$ be the share of computers that are Byzantine. We know a majority of computers selected are rational, as $\numComputers >  \numComputersBound$. Therefore, Byzantine computers are rewarded with probability $\probBadRating < \probGoodRating$. For a given computation, the expected reputation increase of a selected Byzantine computer is $\probBadRating$, while the expected increase for a selected rational computer is $\probGoodRating$. 
    Given $\numComputers$ are selected for the computation, the expected number of these being rational computers is  $(1-\adversarialBound) \numComputers$, while the number of selected Byzantine computers is $\adversarialBound \numComputers$. Furthermore, this means the expected increase in reputation for rational computers is $(1-\adversarialBound) \numComputers \probGoodRating $, while the expected increase in reputation for Byzantine computers is $\adversarialBound \numComputers\probBadRating $. At the beginning of the protocol, the probability of selecting a Byzantine player from the set of all computers is in direct proportion to starting reputation. Given initial reputations of $\initRep$, after the first computation, the selection probability of a Byzantine computer reduces in expectancy to:
    \begin{equation}
        \frac{\adversarialBound (|\computers| \cdot \initRep + \numComputers\probBadRating)}{|\computers| \cdot \initRep + \numComputers \big( (1-\adversarialBound)\probGoodRating +\adversarialBound\probBadRating \big)}.
    \end{equation}
    First it be can see that 
    \begin{equation}
        \frac{\adversarialBound (|\computers| \cdot \initRep + \numComputers\probBadRating)}{|\computers|\cdot \initRep + \numComputers \big( (1-\adversarialBound)\probGoodRating +\adversarialBound\probBadRating \big)}<\adversarialBound
    \end{equation}
    meaning Byzantine selection probability is decreasing. To prove that Byzantine selection probability tends to 0 in the number of computations as described in the Lemma statements, let $\adversarialBound_k$ be the Byzantine computer selection probability after $k$ computations. We have the expected Byzantine selection probability after $k+1$ computations, denoted , $\adversarialBound_{k+1}$, is:
    \begin{equation}
    \begin{split}
        \frac{\adversarialBound_k (|\computers| \cdot \initRep + \numComputers\probBadRating)}{|\computers| \cdot \initRep + \numComputers \big( (1-\adversarialBound_k)\probGoodRating +\adversarialBound_k\probBadRating \big)} \\
        = \frac{\adversarialBound_k (|\computers| \cdot \initRep + \probBadRating\numComputers)}{|\computers| \cdot \initRep + \numComputers \probGoodRating - \adversarialBound_k \numComputers (\probGoodRating-\probBadRating )}.
    \end{split}
    \end{equation}
    We have already seen $\adversarialBound_{k+1}$ equals
    \begin{equation}
        \frac{\adversarialBound_k (|\computers| \cdot  \initRep + \numComputers\probBadRating)}{|\computers| \cdot \initRep + \numComputers \probGoodRating - \adversarialBound_k \numComputers (\probGoodRating-\probBadRating )}< \adversarialBound_k.
    \end{equation}
    which implies:
    \begin{equation}
        \frac{ (|\computers| \cdot \initRep + \numComputers\probBadRating)}{|\computers| \cdot \initRep + \numComputers \probGoodRating - \adversarialBound_k \numComputers (\probGoodRating-\probBadRating )}<1.
    \end{equation}
    Letting the term on the left be $r_k$, we can see $r_k$ is decreasing in $k$ as:
    \begin{itemize}
        \item $\numComputers (\probGoodRating-\probBadRating )>0$ (because $\probGoodRating>\probBadRating$).
        \item $0<\adversarialBound_{k+1}<\adversarialBound_k$.
    \end{itemize} 
    These together mean the negative term in the denominator of $r_k$, $\adversarialBound_k \numComputers (\probGoodRating-\probBadRating )$, is increasing (towards 0) and as such the denominator of $r_k$ is increasing. Therefore $\adversarialBound_k< \adversarialBound_0 r_{0}^k$, with $r_0 <1$. The result follows. 
\end{proof}

\begin{remark}
    Lemma \ref{lem:diminishingProb} depends on the output of the on-chain randomness oracle being unpredictable when Request is called. Existing solutions, such as the Chainlink VRF \footnote{\url{https://docs.chain.link/docs/chainlink-vrf/}}, provide proofs that provided randomness was generated correctly. Analysis of the quality of this randomness is beyond the scope of this work. 
\end{remark}
\end{spacing}

As a direct consequence of Lemma \ref{lem:diminishingProb}, with reasonable choices for rewarding functions and number of computers per-computation (explored in Table \ref{table:advShares}), both enforceable by the protocol, Byzantine players are eventually removed from the system. This improves the efficiency of the protocol over time, reducing the minimum requirements for computers, and as such, latency, transaction fees, and rewards.

\subsection{Privacy \texorpdfstring{$\protocolName$}{TEXT}}\label{sec:PrivacyMarvel}

In this section we outline a privacy enhancement to $\protocolName$ which we call Privacy $\protocolName$. The motivation for this enhancement is to allow for an additional level of computer privacy which can be seen as necessary in computations such as those in FL protocols. The privacy provided is based on existing, well-known ZK techniques. However, this additional privacy on top of the novel contributions of being strong incentive compatible, generically applicable and fully decentralized further add to the applicability and utility of our work in an even larger set of DC problems.

We present Privacy $\protocolName$ by describing it's key differences to $\protocolName$ to ensure that in an optimistic scenario, only the requester and computers involved in a computation learn the results, and that players in the system can at most infer a computer submitted a good result (or bad result), but not which of the good results (bad results). In the pessimistic scenario, all players in the blockchain observe the results, but it still holds that any player in the system can at most infer a computer submitted a good result (or bad result), but not which of the good results (bad results). In Privacy \texorpdfstring{$\protocolName$}{TEXT}, there is an additional contract, \textit{Reveal}, which is to be executed after the Response contract, and before rewards are finalized. The purpose of the Reveal contract is described later in this section.

During computer registration, computers in Privacy $\protocolName$ privately generate $\commSerialNum_1, \ $ $ \commRandomness_1 \in \{0,1\}^{O(\cryptoParam)}$, and attach $\regToken_1 \assign $ $ \commit(\commSerialNum_1,\commRandomness_1)$ to the registration message, as described in Section \ref{sec:ZK}. Then, when a requester requests a computation, and the indices for computation $\indicesForComputation$ are calculated, the requester now generates a Merkle Tree containing the indices as specified in $\indicesForComputation$. This Merkle Tree is the structure to which only selected computers submitting results can prove membership in ZK. Results are therefore associated to a ZK set-membership proof which reveals nothing about the player that generated the proof. In Privacy $\protocolName$, this separates result submission and player identity. To maintain this separation of identity from result, ZK set-membership proofs are required in the updated Finalize contract, described later in this section. 

In addition to the deposits of $\protocolName$, the requester must also deposit a pool of money necessary to incentivize relayers (as described in Section \ref{sec:Relayers}) to publish transactions on behalf of computers involved in the computation. Given the amount of money required by one relayer to include a blockchain transaction is $\feeRelayer$, the additional required deposit is $\numComputers \cdot \feeRelayer$ for the relaying of computer messages during the Respnse phase.

In the Response contract for Privacy $\protocolName$, computers selected in $\indicesForComputation$ privately generate a ZK-proof proving their membership in $\indicesForComputation$. Formally, given a computation result of $\result$, the computer sets $\response \assign \encrypt $ $(\compute(\computation)$ $,\computation.\compEncKey)$. The computer also generates a new $\commSerialNum_2, \ $ $ \commRandomness_2 \in \{0,1\}^{O(\cryptoParam)}$ pair, and computes $\regToken_2 \assign $ $ \commit(\commSerialNum_2,\commRandomness_2)$ . Setting $m \assign \langle \computation, \response, \commRandomness_1, \regToken_2 \rangle$, the computer generates a NIZKSoK  $\proofZK_1 \assign $ \textit{NIZKSoK}[$m$]$\{(\regToken_1,$ $\commRandomness_1):$ MemVerify ($\indicesForComputation,$ $ \regToken_1 )$ $=1 $ $ \And $ $ \regToken_1=$ $ \commit($ $\commSerialNum_1, $ $\commRandomness_1) $ $\}$. 
Finally, the computer then publishes $m$ and $\proofZK_1$ to the blockchain through a relayer, who receives $\feeRelayer$ upon the transactions addition to the blockchain.  

In the Reveal contract, the requester off-chain performs the same calculations that were done on-chain in $\protocolName$ to calculate the results to be rewarded, but instead of adding computer indices to $\goodResponses$, the requester adds the corresponding $\regToken_2$s. The requester publishes $\goodResponses$ and the encryption of $\computation.\compDecKey$ using each public key corresponding to computers in $\computation.\indicesForComputation$ to the blockchain. However, rewards are not immediately distributed to computers in $\goodResponses$.

In the Finalize contract, computers now have a chance to contest the computation of $\goodResponses$ for up to $\revealTXDelay$ blocks after the Reveal contract is called. If $\goodResponses$ was computed incorrectly, any of the computers in $\computation.\indicesForComputation$ can publish the decryption of all results and pass them into the $\rateComputations$ function of $\protocolName$, proving the incorrect computation of $\goodResponses$ by the requester. In this case, all computers are rewarded, and the requesters escrow is destroyed. To prevent malicious computers in $\computation.\indicesForComputation$ from attempting this proof in order to reveal computation results, a further escrow is required, which is returned on the correct proving of miscomputation of $\goodResponses$ by the requester.

If instead $\goodResponses$ was computed correctly, any computer whose $\regToken_2$ is included in $\goodResponses$ can generate a proof of membership to $\goodResponses$. Furthermore, as $\regToken_1$ can no longer be used for future computations (using the same $\regToken_1$ would reveal the same $\commRandomness_1$ in the next $\computation$), the computer generates a new $\commSerialNum_3, \ $ $ \commRandomness_3 \in \{0,1\}^{O(\cryptoParam)}$ pair and corresponding $\regToken_3 \assign $ $ \commit(\commSerialNum_3,\commRandomness_3)$. Setting $m\assign\langle \computation, false, \regToken_3 \rangle $, the computer with $\regToken_2$ in $\goodResponses$ then generates a NIZKSoK  $\proofZK_2 \assign $ \textit{NIZKSoK}[$m$]$\{(\regToken_2,$ $\commRandomness_2):$ MemVerify ($\goodResponses,$ $ \regToken_2 )$ $=1 $ $ \And $ $ \regToken_2=$ $ \commit($ $\commSerialNum_2, $ $\commRandomness_2) $ $\}$. Finally, the computer publishes $m$ and $\proofZK_2$ to the blockchain. In this variation of the protocol, computers not included in $\goodResponses$ must also respond, updating there $\regToken_1$. If this is not done, computers forfeit $\escrowComputer$ and any reputation earned. 

\begin{spacing}{1.5}
\begin{observation}

    As computers submit results through a relayer, and with an accompanying NIZKSoK $\proofZK$ proving membership in the selected indices for computation, all players in the blockchain protocol can be sure the player submitting the message must be a selected computer, but nothing else can be learned about the submitting player's identity. Similarly, when collecting rewards, or replacing $\regToken_1$, the only thing learned when a computer submits a valid message during the Finalize phase is to which set, $\goodResponses$ or $\responses \backslash \goodResponses$, the computer belongs.
\end{observation}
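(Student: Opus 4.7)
The plan is to reduce each privacy claim in the observation to three composable primitives used by Privacy $\protocolName$: (i) the zero-knowledge property of $\textit{NIZKSoK}$, (ii) the anonymity of the relayer/Tor transport layer, and (iii) the hiding property of the commitment scheme producing the $\regToken$ values. The overall strategy is to show that the joint distribution of a submission tuple $(m, \proofZK)$ together with its transport metadata is computationally indistinguishable across any two candidate submitters from the relevant set: $\indicesForComputation$ for the Response phase, and $\goodResponses$ or $\responses \backslash \goodResponses$ for the Finalize phase.

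For the first claim, soundness of $\textit{NIZKSoK}$ gives one direction immediately: an accepted proof $\proofZK_1$ witnesses an opening $(\regToken_1, \commRandomness_1)$ with $\regToken_1 \in \indicesForComputation$, so the submitter must be a selected computer. For the converse, I would fix two hypothetical submitters $c, c' \in \indicesForComputation$ and run a hybrid argument: first replace the real $\proofZK_1$ with a simulated one (indistinguishable by the zero-knowledge property), then replace the registration commitments of $c$ and $c'$ by commitments to independent randomness (indistinguishable by hiding), and finally argue that the transport layer reveals nothing beyond a standard Tor-routed packet (indistinguishable by the relayer anonymity assumption). After these hybrids, the adversary's view is independent of whether $c$ or $c'$ submitted, yielding the claim.

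The second claim follows by an essentially identical hybrid argument, now with $\indicesForComputation$ replaced by $\goodResponses$ for reward claimants, or by $\responses \backslash \goodResponses$ for computers updating $\regToken_1$. In both cases the computer proves membership in a specific public set, and by the same composition of zero-knowledge, hiding, and transport anonymity, the adversary learns nothing beyond that set-membership fact.

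The main obstacle, and the delicate point, is the serial-number revelation that $\textit{NIZKSoK}$ entails: each use of a token exposes its $\commSerialNum$. The argument must therefore verify that within a single phase each token is used at most once, and that between phases the computer draws a fresh $(\commSerialNum_i, \commRandomness_i)$ pair so that no chain of serial-number equalities links a computer across computations or across the Response, Reveal, and Finalize contracts. Tracing the pseudocode to confirm that $\regToken_1$, $\regToken_2$, and $\regToken_3$ are drawn independently and never reused across phases is the most error-prone step of the proof; once established, the rest is standard simulation-based reasoning that composes the three underlying primitive guarantees.
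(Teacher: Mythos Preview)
The paper offers no proof of this observation at all: it is stated bare, immediately after the Privacy $\protocolName$ description, and treated as self-evident from the NIZKSoK and relayer properties laid out in Sections~3.2 and~3.3. Your proposal is therefore strictly more rigorous than the paper. Where the paper implicitly relies on the reader accepting ``zero-knowledge proof of set membership plus anonymous relay implies nothing leaks beyond set membership,'' you make this precise via a standard hybrid argument composing the zero-knowledge simulator, commitment hiding, and transport anonymity, and you correctly flag the serial-number linkage issue that the paper addresses only by silently rotating $(\commSerialNum_i,\commRandomness_i)$ pairs between phases.

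One point worth noting: your hybrid for the Finalize phase invokes relayer anonymity, but the paper only explicitly funds and mandates relayers for the Response phase (the requester deposits ``$\numComputers \cdot \feeRelayer$ for the relaying of computer messages during the Response phase''), and the Finalize description merely says the computer ``publishes $m$ and $\proofZK_2$ to the blockchain.'' If Finalize messages are posted directly, the transaction origin would leak the computer's address regardless of the NIZKSoK, and the second half of the observation would fail. This is a gap in the paper's protocol specification rather than in your reasoning, but since your proof relies on transport anonymity at Finalize, you should make that assumption explicit.
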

\end{spacing}

\subsection{Further Privacy Enhancements}\label{sec:FurtherPrivacyRequirements}

There are further privacy enhancements possible for $\protocolName$. One such enhancement is to detach reward collection/reputation updates from the computation. Given $\goodResponses$ was calculated correctly, computers included in $\goodResponses$ can instead delay their claiming of the reward and associated reputation increase arbitrarily. After a Finalize phase without arbitration, the set of $\regToken_2$s corresponding to $\goodResponses$ can be added to a pool of all recorded good responses throughout the protocol. These can then be immediately/periodically/sporadically claimed by computers, depending on the privacy requirements of the computer in question. This again uses the same NIZK set-membership techniques, except now with a larger set in which to diffuse.

\section{Implementation Analysis}\label{sec:Analysis}

In this section we analyse $\protocolName$ and Privacy $\protocolName$ as described in Sections \ref{sec:MarvelAlgos} and \ref{sec:PrivacyMarvel}. We show that on top of the unique formal guarantees of Section \ref{sec:properties}, both protocols are cost-effective and practical for both computers and requesters. Our reputation and computer selection mechanisms perform analogously to those graphed in \cite{CoUtileP2PDecentralizedComputing}. Crucially however, our protocols are proven to be strong incentive compatible in a fully decentralized setting, where computers an requesters have asymmetric utilities. The encoding of $\protocolName$ is available here \cite{MarvelDCGithub}.

\subsection{Gas cost of running \texorpdfstring{$\protocolName$}{TEXT} and Privacy \texorpdfstring{$\protocolName$}{TEXT}}\label{sec:CostAnalysis}

\begin{table}[H]
\begin{center}
    
\scalebox{0.7}{
  \begin{tabular}{ |l|l|l|}
    \toprule
     & $\protocolName$ & Privacy $\protocolName$  \\
    \bottomrule
     Setup() & 2,000,000 & 2,450,000 + $h_\secParam \cdot 150,000 ^{\dagger}$ \\
     \midrule
     Register() & 80,000 & 82,000 \\
     \midrule
     Request() & 500,000 + $\numComputers$ $\cdot$ 31,000 + 0.25 LINK$^{\dagger \dagger}$ & \makecell{500,000 + $\numComputers$ $\cdot$ 31,000 \\ + $\numComputers$ $\cdot$ $h_\secParam \cdot 51,000^{\dagger \dagger\dagger}$ + 0.25 LINK$^{\dagger \dagger}$} \\
     \midrule
     Response()& 60,000  & 63,000 +$300,000^{\dagger \dagger \dagger \dagger}$ \\ 
    \midrule 
    Reveal() & N/A & 80,000 +$\numToReward$ $\cdot$ $h_\secParam \cdot 51,000^{\dagger \dagger\dagger}$ \\
    \midrule
    Finalize() & - & - \\
    \midrule
     -No Arbitration & 80,000 +$\numToReward$ $\cdot$ 14,000 & (results not revealed) 83,000 + $300,000^{\dagger \dagger \dagger \dagger}$ \\
     \midrule
     -Arbitration & N/A & 80,000 +$\numComputers$ $\cdot$ 14,000 \\
     
    \bottomrule
  \end{tabular}}
  \caption{Gas costs of each contract call in $\protocolName$ and Privacy $\protocolName$. \footnotesize{$^\dagger$Cost of accessing Chainlink randomness. $^{\dagger\dagger}$Cost to generate a once-off Merkle Tree which can be reused for all computations. $^{\dagger\dagger\dagger}$ Cost to insert computer into copy of Merkle Tree. $^{\dagger\dagger\dagger\dagger}$ Cost to verify proof of membership on-chain.} } \label{table:gascosts}
  
\end{center}
\end{table}

There are several considerations when calculating the cost of running Marvel DC on a blockchain. In the case where a computation has a single 32 byte answer, the costs are significantly less than in the case of gradient estimation problem where answers contain thousands or millions of numbers. More concerning still is the prohibitive nature of messages in the order of MBs in many blockchain protocols. To counteract this, messages for the Response and Finalize contracts can be submitted to memory-efficient alternatives such as IPFS\footnote{\url{https://ipfs.io/}}, Layer-2 chains or even through an MPC protocol between computers and requesters (not necessarily the same entities involved in the computation). All of these suggestions require the use of checkpoints (such as those proposed in \cite{Casper}), pieces of summary information identifying permanent states in an off-chain protocol which are published to the main-chain, updating the state of the main-chain.

In Table \ref{table:gascosts} we present the approximate gas costs for $\protocolName$ and Privacy $\protocolName$ on an Ethereum-compatible blockchain of the main operations for a requester and computers in a computation, given $\numComputers$ and a 256-bit result, with all messages published on-chain. This can be extended to $l$-dimensional results for any $l>1$. The set-membership tools described in Privacy $\protocolName$ are pre-compiled, and currently being used in the Tornado Cash privacy protocol. We thus calculate the gas cost of the set-membership tools using the Tornado Cash library. All other operations are typical on-chain array manipulation, encryption/decryption, deposit, withdraw and writing to memory operations. We also include the cost of calling an on-chain randomness oracle in our calculation. This call needs to be made prior to the calculation of the indices for computation, and must be made when depositing rewards and escrow to ensure the requester cannot manipulate the selection of computers. To approximate the costs of running Privacy $\protocolName$ described in Section \ref{sec:PrivacyMarvel}, we utilize $\numComputersBound$ the maximum number of computers required to select in order to guarantee a majority are rational. We let $h_\secParam = \lceil\textit{log}_2(\numComputersBound)\rceil+1$. This is the required height of the Merkle Tree to be used in the ZK-proofs for set-membership.

As an example, using current gas costs of 32 GWEI/gas and 1 GWEI = $\$0.00000335$ \footnote{\url{https://etherscan.io/gastracker}, Accessed: 03/04/2022}, for $\numComputersBound=10$, and a computation with $\numComputers=10$, for the basic protocol, this is a per computer cost of 140,000 gas, or $\$15$, while a requesters cost is 950,000 gas, or $\$102$ + $\$4.5$ for the LINK\footnote{\url{https://docs.chain.link/docs/vrf-contracts}, Accessed:03/04/2022 } required to access the on-chain randomness \footnote{\url{https://www.coingecko.com/en/coins/chainlink}, Accessed:03/04/2022}. Deploying on a cheaper protocol like Harmony with gas costs of 
$\$0.00000005$ at the time of writing \footnote{\url{https://explorer.harmony.one/}, Accessed: 03/04/2022}, this is an approximate cost of $\$0.001$ and $\$4.505$ respectively. 

For Privacy $\protocolName$ on Harmony with $\numToReward=5$, this gives $h_\secParam=4$, and a gas cost of 950,000+ 2,040,000 for requesting, and 80,000+1,020,000 for revealing rewards, a total of 4,190,000 gas ($\$0.02$), plus the cost for accessing on-chain randomness ($\$4.5$), a total cost for the requester of $\$4.52$ . For computers, the cost is 746,000 gas ($\$0.004$) in the optimistic execution, and 503,000 ($\$0.0026$) in the pessimistic, but only for one computer. This means both implementations of $\protocolName$ are suitable for immediate deployment on public blockchains. 

\subsection{Latency and Throughput}

A direct comparison of our protocol to most existing distributed FL solutions with respect to throughput and latency is of limited benefit. This is because the the concept of time in decentralized systems (blocks in a blockchain) has no immediate parallel when centralized parties are required, as in \cite{BlockchainDecentralisedFedLearningLi,FLChainBao,FLChainMajeed,FedLearningMeetsBlockchainMa,BlockFLKim,DeepChainWeng, BlockchainRepAwareFedLearningRehman}.

Considering the protocols which allow for tokenized rewards in a decentralized setting (protocols which do not allow for tokenized rewarding have limited utility in a decentralized setting), this leaves only \cite{IncentiveMechFLBlockchainToyoda} and \cite{FairnessIntegrityPrivacyBlockchainBasedFLSystemRuckel}. Using the terminology of our paper, every protocol \textit{phase}, a period of time where an event occurs which requires a response, lasts up to $\revealTXDelay$ blocks.  These $\revealTXDelay$ blocks (as used in Section \ref{sec:AlgoOverview}) are equivalent to the time required to ensure players can submit transactions to the blockchain after a particular on-chain event, such as a computation request. $\protocolName$ therefore lasts up to $2 \revealTXDelay$ blocks, which covers the time for computers to respond to a request, and the time taken for the requester to reveal the decryption key. The protocol of \cite{IncentiveMechFLBlockchainToyoda} lasts at least $4 \revealTXDelay$ blocks to ensure workers are incentivized to submit at least one correct model update (using the terminology of \cite{IncentiveMechFLBlockchainToyoda}, 2 model update rounds are needed for this to be the case). The additional costs are due to requesters and computers being required to respond twice each after the initial request. The protocol of \cite{FairnessIntegrityPrivacyBlockchainBasedFLSystemRuckel} requires at least $3 \revealTXDelay$ blocks, as computers must commit to the data-set to be used, before submitting a computation result. All 3 protocols, including our own, are equipped to spawn arbitrarily many computations in parallel.

\subsection{Necessary bounds on the number of computers per-computation}

\begin{table}[H]
\begin{center}
    
\scalebox{0.83}{
  \begin{tabular}{ |r|r|r|r|r|r|r|r|}
    \toprule
     &$\numComputers=$\textbf{10}&\textbf{25}&\textbf{50}&\textbf{100}&\textbf{250}&\textbf{500}&\textbf{1000}\\
     \midrule
    $\adversarialBound=$ \textbf{0.45}&0.504&0.694&0.716&0.817&0.936&0.986&0.999\\
    \midrule
    \textbf{0.4}&0.633&0.846&0.902&0.973&0.999&1&1\\
    \midrule
    \textbf{0.33}&0.787&0.958&0.989&1&1&1&1\\
    \midrule
    \textbf{0.25}&0.922&0.997&1&1&1&1&1\\
    \midrule
    \textbf{0.2}&0.967&1&1&1&1&1&1\\
    \midrule
    \textbf{0.1}&0.998&1&1&1&1&1&1\\
    \midrule
    \textbf{0.05}&1&1&1&1&1&1&1\\
    \midrule
    \textbf{0.01}&1&1&1&1&1&1&1\\
    \bottomrule
  \end{tabular}}
  \caption{The approximate probability, correct to 3 decimal places, of choosing a majority of rational computers given specific starting adversarial $\%$ of computers $\adversarialBound$ (left column) and selected numbers of computers $\numComputers$ (top row) from a sufficiently large population of computers such that adversarial share represents the per-selection probability of selecting an adversarial computer throughout sampling.} \label{table:advShares}
  
\end{center}
\end{table}

To estimate the required value of $\numComputersBound$ required to ensure a sufficiently probably majority of rational players, Table \ref{table:advShares} serves as a starting point. We recommend conservative choices of adversarial share and probability of rational majority. For example, given an adversarial share of $20\%$ of computers, any value of $\numComputersBound\geq 25$ suffices to ensure a majority of rational computers are chosen per-computation. 

\section{Conclusion}

We present $\protocolName$, a strong incentive compatible blockchain-based decentralized DC protocol which stands as a new standard in constructing fully decentralized DC protocols. This is achieved through a novel combination of strong incentivisation of rational computers in the presence of Byzantine computers, reputation-aware computer selection and privacy-enhancing techniques which can be bootstrapped to the core $\protocolName$ protocol to allow for the use of $\protocolName$ on computations revealing sensitive data about the computers participating in the protocol. Much work remains to ensure DC protocols remain incentive compatible and practical where computations produce large outputs, with storage being a limiting resources in mainstream blockchain protocols. $\protocolName$ and Privacy $\protocolName$ serve as key protocols with which to continue this research. 

\section*{Declaration of Competing Interest}

The authors declare that they have no known competing financial interests or personal relationships that could have appeared to influence the work reported in this paper.

\section*{Acknowledgements}

We thank IEEE Fellow Josep Domingo-Ferrer and his group at Universitat Rovira i Vigili for many interesting and motivating conversations related to this paper.

\section*{Vitaes}

\subsection*{Conor McMenamin}
\begin{figure}[H]
	\includegraphics[width=0.25\textwidth]{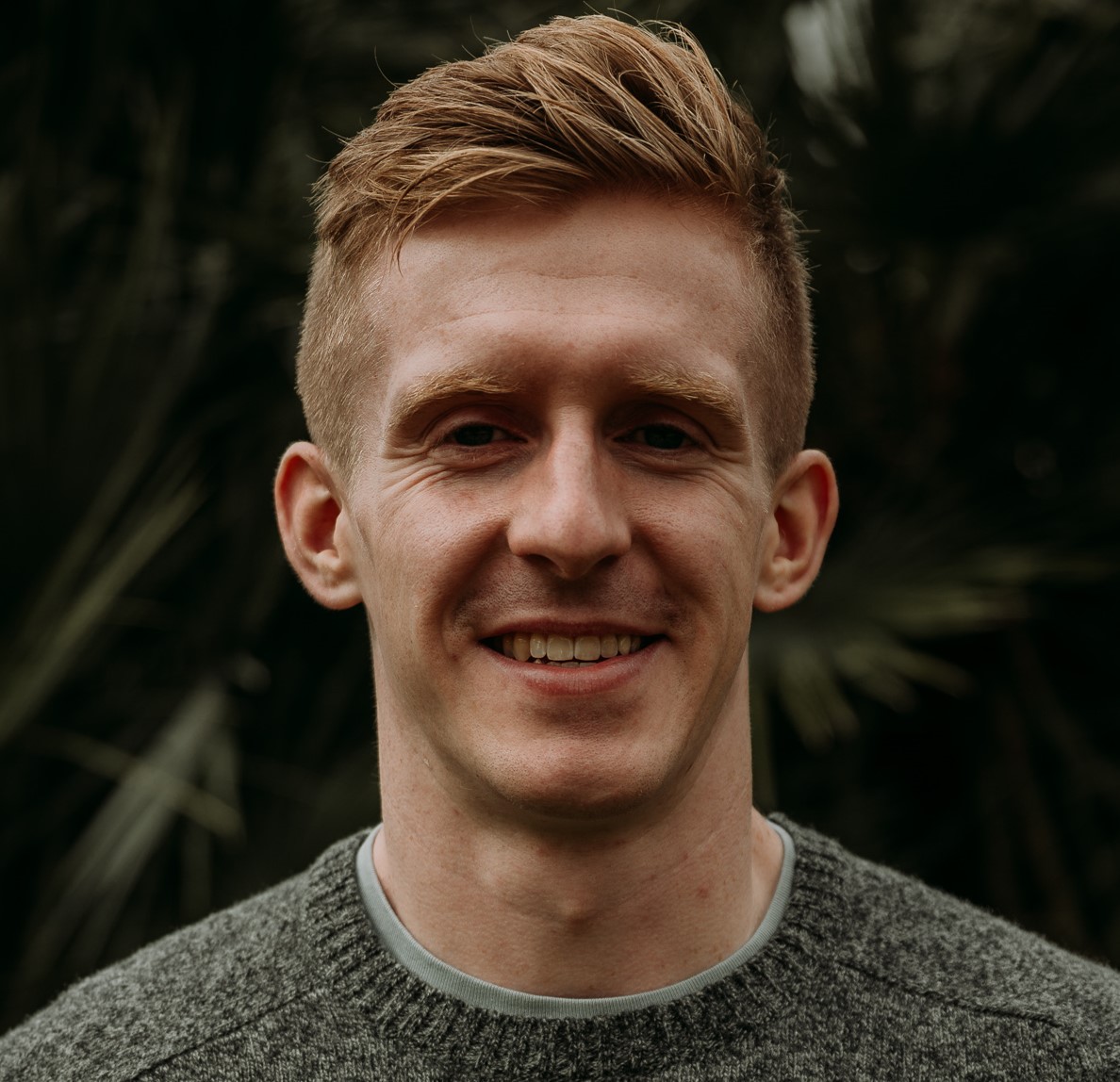}
\end{figure}

Conor is a Marie Curie Fellow, working as researcher at Nokia-Bell Labs undertaking a PhD in Universitat Pompeu Fabra (UPF). He holds a Bachelor of Science degree in Computational Thinking (2015), and a Master of Science degree in Cryptography (2019). His research focuses on the game-theoretic aspects of blockchain technology, both at the consensus and the application layer.

\subsection*{Vanesa Daza}
\begin{figure}[H]
	\includegraphics[width=0.3\textwidth]{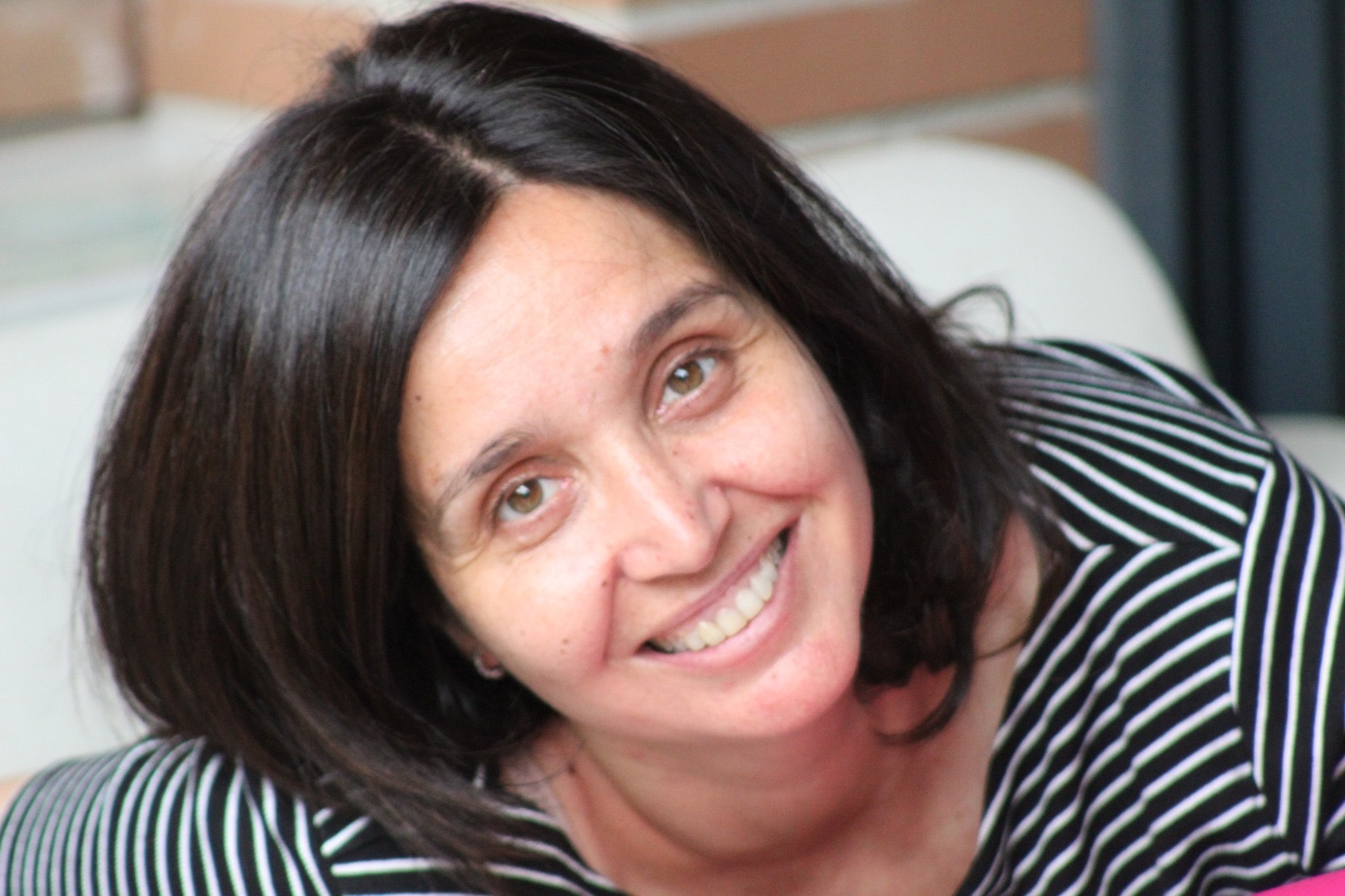}
\end{figure}

Vanesa is an Associate Professor at UPF. She holds a degree and a PhD in Mathematics.
Her research now focuses on designing cryptographic primitives and protocols to undertake  security and privacy challenges on blockchain technologies. She has co-authored over 40 papers and is the co-inventor of two international patents.
She has served as Associate Editor for two IEEE Transactions journals, several Program Committees and reviewer of international journals.
She currently coordinates two H2020 projects and collaborates with several companies. Likewise, she has served at UPF in several roles, including chairing the Department of Information and Communication Technologies.

\addcontentsline{toc}{section}{Bibliography}
\bibliographystyle{plain}
\bibliography{references}

\end{document}